\documentclass{article}
\usepackage{style}

\newcommand*{\mypath}{figures/}
\usetikzlibrary{external}
\tikzexternalize[prefix=figures/]
\tikzexternaldisable

\title{Constrained school choice with incomplete information}
\author{Hugo Gimbert \and Claire Mathieu \and Simon Mauras}
\date{\today}

\begin{document}

\maketitle

\begin{abstract}
School choice is  the two-sided matching market where students (on one side) are to be matched with schools (on the other side) based on their mutual preferences. The classical algorithm to solve this problem is the celebrated deferred acceptance procedure, proposed by Gale and Shapley.
After both sides have revealed their mutual preferences, the algorithm computes an optimal stable matching.
Most often in practice, notably when the process is implemented by a national clearinghouse and thousands of schools enter the market, there is a quota on the number of applications that a student can submit: students have to perform  a partial revelation of their preferences, based on partial information on the market.
We model this situation by drawing each student type from a publicly known distribution and study Nash equilibria of the corresponding Bayesian game.
We focus on symmetric equilibria, in which all students play the same strategy.
We show existence of these equilibria in the general case, and provide two algorithms to compute such equilibria under additional assumptions, including the case where schools have identical preferences over students.
\end{abstract}

\section{Introduction}
\label{sec:constrained:intro}

School choice is referred in the literature as the two-sided matching market where students (on one side) are to be matched with schools (on the other side) based on their mutual preferences. The classical algorithm to solve this problem is the celebrated deferred acceptance procedure, proposed by Gale and Shapley~\cite{gale1962college}, and since implemented by many clearinghouse \cite{abdulkadirouglu2003school,abdulkadirouglu2005new,correa2019school}.
Most often in practice, the clearinghouse sets an upper quota on the number of applications each student can submit. This requires a strategic behaviour from students who should find a balance between applications to top-tier schools and applications to less attractive but also less selective lower-tier schools.

The existence and computability of Nash equilibria is a desirable property for two reasons.
First, Nash equilibria are among the possible long-term outcomes of the market, possibly emerging after a series of best-response dynamics or evolutional selection of strategies. Second, and most importantly, being able to compute a Nash equilibrium  provides a solid basis to develop a recommendation system in order to help the students to select the schools they want to apply~to.

In case complete information about the preferences of the students and schools is available,
implementing a Nash equilibrium is rather easy.
The strategic behaviour of students limited to a fixed number of options was studied by Romero-Medina \cite{romero1998implementation}, and later investigated by Calsamiglia, Haeringer and Klijn \cite{haeringer2009constrained,calsamiglia2010constrained}.
After a pre-computation of the student-optimal stable matching $\mu_{\mathcal S}$, a simple recommendation can be made to every student matched in $\mu_{\mathcal S}$: they only need to apply to a single school, their match in $\mu_{\mathcal S}$. 
As a direct corollary of~\cite{dubins1981machiavelli,roth1982economics}, this leads to a Nash equilibrium. Remark that student unmatched in $\mu_{\mathcal S}$ wont be matched in this equilibrium, whatever strategy they choose.

In practice, assuming that the student-optimal stable matching is computable \emph{ex-ante} is rather unrealistic. For example, in the French college admission system ``Parcoursup", there are more than 900000 students signed up.
Applicants should report a shortlist of 20 wishes before a fixed deadline. Based on statistics of previous years, one might evaluate how the grades of a particular student compare to others and evaluate the percentage of students that will be ranked higher in a particular school. But acquiring before the deadline the information needed for an exact computation of the student-optimal matching is unfeasible.

In this paper, we propose a formal model for the constrained school choice with incomplete information, and study the existence and computability of Nash equilibria in the associated 
incomplete information game.
In our model, each student draws a type from a publicly known distribution $\mu$ (see \cref{sec:constrained:model}). In \cref{sec:constrained:examples}, we detail interesting examples that can be used to state recommendations for students, schools and decision makers. In \cref{sec:constrained:existence}, we give the  proof of existence of a symmetric Bayes-Nash equilibrium (\cref{thm:existence}). In \cref{sec:constrained:efficient} we give efficient algorithms to compute equilibria when the number of types is finite and additional hypotheses are made, including the case where schools have identical preferences over students (\cref{thm:computealpha,thm:compute-identical,thm:compute-strongalpha}). In \cref{sec:constrained:approx} we prove a convergence theorem, showing that one can compute an equilibrium for a game with a continuous type distribution $\mu$, using a (weakly) converging sequence of distributions $(\mu_k)_{k\geq 1}$ having finite supports (\cref{thm:convergence}).

\vspace{-.4cm}
\paragraph{Related work.}
This paper is closely related to the literature of matching under random preferences. Pittel \cite{pittel1992likely} study balanced matching markets with uniformly random preferences. Rephrasing his results in our setting, the student-proposing deferred acceptance procedure matches with high probability every student to one of her top $\log^2 n$ choices, which proves that an upper quota of $\log^2 n$ applications per student does not deteriorate the outcome.
Immorlica and Mahdian \cite{immorlica2015incentives}, and Kojima and Pathak \cite{kojima2009incentives} study matching markets where one side of the market has random preference lists of constant size, and show that such markets have a (nearly) unique stable matching. When quotas are constant and preferences are uniform, this implies that games based on student or school proposing deferred acceptance are (almost) the same.

More recent papers discuss the effect of an upper quota on the number of applications. Beyhaghi, Saban and Tardos \cite{beyhaghi2017effect} study the efficiency of equilibria in a model where each side is divided into two uniform tiers, and each student chooses her number of applications to top-tier schools.
Beyhaghi and Tardos~\cite{beyhaghi2021randomness} study the social welfare (size of the matching) as a function of the number of applications, in a model where preferences of agents are drawn uniformly at random. 
Echenique, Gonzalez, Wilson and Yariv~\cite{echenique2020top} examine the National Resident Matching Program and argue that doctors are strategic when reporting their preferences.

The best response of a student to the strategies of others is related to the simulatenous search literature. Chade and Smith~\cite{chade2006simultaneous} discuss the problem where one student must choose a portfolio of schools in which she applies: each application has a cost, a probability of success and a cardinal utility when successful. Ali and Shorrer~\cite{ali2021college} generalize their model to allow correlations between admission decisions.

\vspace{-.4cm}
\paragraph{Takeaway message.}
In general, the deferred acceptance mechanism is known to be strategy-proof for the proposing side \cite{dubins1981machiavelli}, but no mechanism is truthful for both sides of the market \cite{roth1982economics}. However, empirical results show that the stable matching is often unique \cite{roth1999redesign}, in which case stable matching procedures are truthful for all agents, even when they have incomplete information~\cite{ehlers2007incomplete}.
Thus, having a unique stable matching is a desirable property, and we argue that this fact carries over to the case where students have restricted preferences. First, in terms of number of equilibria, examples (see \cref{sec:constrained:examples}) illustrate that the fewer stable matchings there are, the fewer equilibria the game has. Second, in terms of outcome, multiple stable matchings can induce outcomes which are unstable (see \cref{sec:constrained:complete}) or sub-optimal (see \cref{sec:constrained:reversed}). And finally, in terms of computability of an equilibrium, \cref{sec:constrained:efficient} give two algorithms to compute equilibria, under extra hypotheses borrowed from the literature of unique stable matchings.

\vspace{-.2cm}
\section{The model}
\label{sec:constrained:model}

We consider a game where players are students who do not know the exact preferences of other students.
For the sake of modeling, each student has a type $T = [0,1]^d$ with $d \geq 1$, which can be thought as a feature vector representing both her preferences and characteristics. Types are drawn without replacement\footnote{Types are drawn without replacement in order to have a well defined game when the distribution $\mu$ is discrete. This does not mean students cannot have the same preferences over schools, as one can duplicate types by increading the dimension $d$ of the type space. When the distribution is non-atomic,  types are drawn independently.} from the set of types, using a probability distribution $\mu \in \Delta(T)$.
Each student~$i$ knows her own type $t_i\in T$ (private information) and the distribution $\mu$ (common information).
Each school $j$ has a capacity $c_j$, a bounded measurable value function $v_j: T\rightarrow\mathbb R_+$ and a measurable scoring function $s_j: T\rightarrow[0,1]$.

The set of actions $A$ is the set of preference lists containing at most $\ell$ schools\footnote{More generally, results of this paper hold if $A$ is an arbitrary subset of preference lists over schools.}.
Each student~$i$ reports a preference list $a_i\in A$.
Schools sort students by decreasing score, breaking ties uniformly at random. Then, we compute a matching using the student proposing deferred acceptance algorithm. Each student $i$ receives a utility $v_j(t_i)$ if she is assigned to school $j$, and a utility of $0$ is she stays unmatched.

\begin{algorithm}
    \begin{algorithmic}
    \State Game parameters: $n$, $m$, $A$, $T$, $(v_j)_{j\in[m]}$, $(c_j)_{j\in[m]}$ and $(s_j)_{j\in [m]}$.
        \Function{Utility}{$(t_i, a_i)_{i\in[n]}\in (T\times A)^n$}
            \State Student $i$ has type $t_i$, reports the preference list $a_i$, and her score at school $j$ is $s_j(t_i)$.
            \State School $j$ has capacity $c_j$, sorts students by decreasing scores,
            \\\hspace{1cm} breaking ties uniformly at random.
            \State Students are assigned to schools using the student proposing deferred acceptance algorithm.
            \State Each student $i$ receives utility $u_i = v_j(t_i)$ if she assigned to school $j$, 
            \\\hspace{1cm} and $u_i= 0$ if she is unassigned. 
            \State\textbf{Return} the vector of utilities $(u_i)_{i\in[n]}$, averaged over all possible tie-breaking choices.
        \EndFunction
    \end{algorithmic}
    \caption{Description of the matching game}
    \label{algo:game-bayesian}
\end{algorithm}

Students choose their actions strategically:
the set of (behavioral) strategies $\mathcal S$ is the set of measurable function $p: T\rightarrow \Delta(A)$.
A strategy profile is a vector of strategies $(p_i)_{i\in[n]} \in \mathcal S^n$, where $p_i$ is the strategy of student $i$ and $p_{-i}$ denotes the vector of strategies of all students except $i$.
Under this strategy profile, the expected payoff of student $i$ is denoted
$U_\mu(p_i, p_{-i})$, this is the $i$-th component of the vector
$\mathbb E_{t,a}[\textsc{Utility}((t_i, a_i)_{i\in[n]})]$, where the expectation is taken over the random draws of $t$ and $a$: $t_i$'s are drawn without replacement from $\mu$ and each $a_i$ is drawn from $p_i(t_i)$.
Notice that this definition already incorporates the symmetry of the game:
if the strategies in $p_{-i}$ are permuted, the payoff of player  $i$ does not change, and if we swap two players, their payoffs are swapped accordingly. In other words, the payoff of a player only depends on his own strategy and the multiset of strategies played by the other players, independently of each player's identity.

A strategy profile is a Bayes-Nash equilibrium if each student cannot improve her utility by deviating from the strategy profile. More precisely, $(p_i)_{i\in[n]} \in \mathcal S^n$ is an equilibrium if $U_\mu(p^*,p_{-i}) \leq U_\mu(p_i,p_{-i})$ for every $i\in[n]$ and $p^* \in \mathcal S$.

Our first theorem states that the strategic part of the game for a student is to choose her (unordered) set of applications. More precisely, once she decided which schools she will apply to, it is optimal for her to sort schools by decreasing value. As a Corollary, when $\ell = m$, the set of actions $A$ is unconstrained and contains all the permutations over schools, thus sorting schools by decreasing score is a dominant strategy.

\begin{theorem}
\label{thm:sorted}
    Let $t\in T$ be a type and $a\in A$ be an action, and define $a^*$ the preference list where schools from $a$ are sorted by non-increasing order of value $v_j(t)$. If $a^*$ is a valid action, then for a student of type $t$ reporting $a^*$ dominates reporting $a$. 
\end{theorem}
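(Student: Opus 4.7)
The plan is to reduce the statement to the classical strategy-proofness of the student-proposing deferred acceptance algorithm, established by Dubins and Freedman \cite{dubins1981machiavelli} and Roth \cite{roth1982economics}. Since \textsc{Utility} is an average over random tie-breaking (and, once we pass to the Bayesian game, over the random draws of the other students' types as well), it is enough to show the dominance pointwise: I fix an arbitrary realization of the other students' types and reports together with a tie-breaking, and compare the match of $i$ under the two reports $a$ and $a^*$, writing $M(\cdot)$ for the school matched to $i$ by DA as a function of her reported list (with the convention $v_\emptyset(t) := 0$ when $i$ is unmatched). The goal is then the pointwise inequality $v_{M(a^*)}(t) \geq v_{M(a)}(t)$.

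To invoke strategy-proofness I introduce an auxiliary strict preference relation $\widehat P$ for student $i$: rank the schools in $a$ by non-increasing $v_j(t)$ (breaking any value ties arbitrarily) and declare every school outside $a$ unacceptable. By construction, $a^*$ is precisely the truthful report under $\widehat P$. Because the tie-breaking at the schools has been fixed, schools now have strict priorities over students, so we are in the standard many-to-one college admissions setting with responsive preferences where the Dubins--Freedman--Roth theorem applies: truthful reporting is weakly dominant for the proposing side. Applied to $\widehat P$, this yields $M(a^*) \succeq_{\widehat P} M(a)$.

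It remains to translate this ordinal inequality into the cardinal one. Both $M(a^*)$ and $M(a)$ lie in $a \cup \{\emptyset\}$, since both reports list only schools of $a$; and on $a \cup \{\emptyset\}$ the relation $\widehat P$ is exactly the order induced by $v_j(t)$, with $\emptyset$ sitting at the bottom because $v_\emptyset(t) = 0 \leq v_j(t)$ for every $j \in a$. Hence $v_{M(a^*)}(t) \geq v_{M(a)}(t)$, and averaging over the randomness yields the theorem. The only real care required is the routine verification that running DA in the actual game with $i$ reporting $a^*$ produces the same match as running it in the auxiliary game where $i$ truthfully reports $\widehat P$; this holds because $i$'s list equals $a^*$ in both cases, schools outside $a$ simply do not appear in her list, and arbitrary tie-breaking among equal-value schools inside $a$ is absorbed into $\widehat P$. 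Beyond this bookkeeping, no new matching-theoretic argument is needed besides Dubins--Freedman--Roth.
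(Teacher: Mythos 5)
Your proof is correct and follows essentially the same route as the paper, which simply cites Theorem 9 of Dubins--Freedman \cite{dubins1981machiavelli} (equivalently Lemma 8.1 of \cite{haeringer2009constrained}); your write-up just makes the standard reduction explicit (fix tie-breaking and opponents' reports, treat $a^*$ as the truthful report under an auxiliary strict preference, invoke strategy-proofness of student-proposing DA, and note both matches lie in $a\cup\{\emptyset\}$ where that order agrees with $v_j(t)$).
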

\begin{proof}
  See Theorem 9 from \cite{dubins1981machiavelli} or Lemma 8.1 from \cite{haeringer2009constrained}.
\end{proof}

\section{Motivating examples}
\label{sec:constrained:examples}

\subsection{Complete information}
\label{sec:constrained:complete}

Recall that types of students are drawn without replacement from $\mu$. Thus, if $\mu$ is a discrete distribution with a finite support of size $n$, then students exactly know the types of other students, which proves that complete information is a special case of our model.

Haeringer and Klijn \cite{haeringer2009constrained} study an equivalent complete information game: $n$ students and $m$ schools have ordinal preferences over one another, and each student must report preference lists of length at most $\ell$ to the clearinghouse. When $\ell=1$, they show that each stable matching can be implemented at equilibrium (Proposition 6.1), and the outcome of every equilibrium is a stable matching (Proposition 6.3). Additionally\footnote{When $\ell \geq 1$, Haeringer and Klijn \cite{haeringer2009constrained} also give (Theorem 6.6) a necessary and sufficient condition on the preferences of schools such that the outcome is stable for every preferences of students and for every equilibrium. This result is in general incomparable with our \cref{thm:unique-equilibrium}}, they give examples to show that when $\ell > 1$, the outcome of some equilibrium can be unstable (Examples 6.6, 8.3, 8.4 and 8.5). In \cref{fig:equilibrium-unstable}, we reproduce Example 8.3 from \cite{haeringer2009constrained}.

\begin{myfigure}[h!]
\centering
\begin{tikzpicture}[scale=2,line width=1,>=stealth]
    \stickfigure{0}{1.5} \node at (0,1.5) {$a$};
    \stickfigure{0}{1.0} \node at (0,1.0) {$b$};
    \stickfigure{0}{0.5} \node at (0,0.5) {$c$};
    \stickfigure{0}{0.0} \node at (0,0.0) {$d$};
    \stickhouse{1}{1.25} \node at (1,1.25) {$1$};
    \stickhouse{1}{0.75} \node at (1,0.75) {$2$};
    \stickhouse{1}{0.25} \node at (1,0.25) {$3$};
    \node[anchor=east] at (-0.2,1.5) {$1 \succ 2 \succ 3$};
    \node[anchor=east] at (-0.2,1.0) {$2 \succ 3 \succ 1$};
    \node[anchor=east] at (-0.2,0.5) {$3 \succ 1 \succ 2$};
    \node[anchor=east] at (-0.2,0.0) {$1 \succ 2 \succ 3$};
    \node[anchor=west] at (1.2,1.25) {$c \succ a \succ b \succ d$};
    \node[anchor=west] at (1.2,0.75) {$a \succ b \succ c \succ d$};
    \node[anchor=west] at (1.2,0.25) {$b \succ d \succ c \succ a$};
    \draw (0.2,1.5) -- (0.8,0.75);
    \draw (0.2,1.0) -- (0.8,0.25);
    \draw (0.2,0.5) -- (0.8,1.25);
\end{tikzpicture}
\hfill
\begin{tikzpicture}[scale=2,line width=1,>=stealth]
    \stickfigure{0}{1.5} \node at (0,1.5) {$a$};
    \stickfigure{0}{1.0} \node at (0,1.0) {$b$};
    \stickfigure{0}{0.5} \node at (0,0.5) {$c$};
    \stickfigure{0}{0.0} \node at (0,0.0) {$d$};
    \stickhouse{1}{1.25} \node at (1,1.25) {$1$};
    \stickhouse{1}{0.75} \node at (1,0.75) {$2$};
    \stickhouse{1}{0.25} \node at (1,0.25) {$3$};
    \node[anchor=east] at (-0.2,1.5) {$1 \succ 2$};
    \node[anchor=east] at (-0.2,1.0) {$2 \succ 3$};
    \node[anchor=east] at (-0.2,0.5) {$3 \succ 1$};
    \node[anchor=east] at (-0.2,0.0) {$1 \succ 2$};
    \node[anchor=west] at (1.2,1.25) {$c \succ a \succ b \succ d$};
    \node[anchor=west] at (1.2,0.75) {$a \succ b \succ c \succ d$};
    \node[anchor=west] at (1.2,0.25) {$b \succ d \succ c \succ a$};
    \draw (0.2,1.5) -- (0.8,1.25);
    \draw (0.2,1.0) -- (0.8,0.75);
    \draw (0.2,0.5) -- (0.8,0.25);
\end{tikzpicture}
\caption{The true preferences of students are displayed on the left: there is a unique stable matching where students $a$, $b$ and $c$ get their 2\textsuperscript{nd} choices. When students are restricted to apply to at most two schools, truncating each student preference list yields an equilibrium: students $a$, $b$ and $c$ get their 1\textsuperscript{st} choice, whereas $d$ will stay unmatched no matter what she reports. The outcome is not stable as student $d$ and school $3$ prefer each other to their respective partners. Observe that if $d$ applies to school $3$ then students $a$, $b$ and $c$ will get their 2\textsuperscript{nd} choices, which is a different equilibrium where the outcome is stable.}
\label{fig:equilibrium-unstable}
\end{myfigure}

\vspace{-.5cm}
In general, because every stable matching can be implemented by an equilibrium, a necessary condition for the outcome to be unique is to have a unique stable matching. This condition however is not sufficient, as \cref{fig:equilibrium-unstable} illustrates with an example having a unique stable matching but several possible outcomes when $\ell=2$. In \cref{thm:unique-equilibrium}, we show that $\alpha$-reducibility is a sufficient condition for having a unique outcome. The notion of $\alpha$-reducibility was introduced by Alcade \cite{alcalde1994exchange} in the context of stable roommates, then investigated by Clark \cite{clark2006uniqueness} who showed it is equivalent to having a unique stable matching in every sub-market (Theorems 4 and 5).

\begin{definition}[$\alpha$-reducibility]
\label{def:alpha}
    We say that a two-sided matching market is $\alpha$-reducible if for every subset of students $A \subseteq [n]$ and subset of schools $B \subseteq [m]$, there exist a fixed pair $(i,j) \in A\times B$ such that $i$ and $j$ prefer each other to everyone else in $A$ and $B$.
\end{definition}

\begin{theorem}\label{thm:unique-equilibrium}
    If the two-sided matching market is $\alpha$-reducible, then for every $\ell \geq 1$ the outcome of every Nash equilibrium is the unique stable matching.
\end{theorem}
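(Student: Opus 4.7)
The plan is to argue by induction on the number of students $n$, repeatedly peeling off a ``top pair'' provided by $\alpha$-reducibility. At the base case $n=0$ the outcome is the empty matching, which is trivially the unique stable matching. For the inductive step, I would apply \cref{def:alpha} to the full sets of types $A=[n]$ and schools $B=[m]$ to obtain a fixed pair $(i^\star,j^\star)$: the type-$i^\star$ student ranks $j^\star$ strictly above every other school, and school $j^\star$ gives type $i^\star$ a strictly higher score than any other student.

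The key step is to show that in every Nash equilibrium, the student whose realized type is $i^\star$ is matched to $j^\star$ almost surely. By \cref{thm:sorted} it is dominant for her to list $j^\star$ first in her reported list. Consider the singleton deviation $a^{\mathrm{dev}}=[j^\star]$: since $s_{j^\star}(i^\star)$ is strictly the highest score at $j^\star$, in student-proposing deferred acceptance the school $j^\star$ will tentatively accept the type-$i^\star$ student upon her proposal and will never reject her later (it would only reject to admit a strictly higher-scoring proposer, which does not exist). Hence $a^{\mathrm{dev}}$ guarantees utility $v_{j^\star}(t_{i^\star})$, which is the maximum possible since $i^\star$ ranks $j^\star$ first. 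An equilibrium strategy for that student must therefore already secure this maximum, so she is matched to $j^\star$ with probability one.

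Next I would reduce to the sub-market obtained by deleting type $i^\star$ and lowering the capacity of $j^\star$ by one (removing $j^\star$ entirely if $c_{j^\star}=1$). Because the type-$i^\star$ student proposes to $j^\star$ first and is immediately accepted, she never proposes anywhere else during deferred acceptance, so from the point of view of the remaining $n-1$ students her only effect is to occupy one seat of $j^\star$. Thus the restriction of the original equilibrium to the remaining students is a Nash equilibrium of the sub-game: any profitable deviation in the sub-game would translate verbatim into a profitable deviation in the original game. Since $\alpha$-reducibility is inherited by sub-markets (the defining property is simply restricted to subsets of the smaller ground sets), the induction hypothesis applies to the sub-market. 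Appending the pair $(i^\star,j^\star)$ to the unique stable matching of the sub-market yields the unique stable matching of the original market, as guaranteed by Clark's characterisation~\cite{clark2006uniqueness}.

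The part I expect to require the most care is the reduction to the sub-game: one must argue that the type-$i^\star$ student's behaviour is truly transparent to the other students under the DA procedure, which in turn rests squarely on the two properties of the fixed pair — that $j^\star$ is her strict top choice (so she proposes there first) and that she has strictly the highest score at $j^\star$ (so she is never rejected and thus never proposes elsewhere). Once this is pinned down, the rest of the argument is a clean induction.
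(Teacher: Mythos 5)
Your overall strategy is the same as the paper's: peel off fixed pairs given by $\alpha$-reducibility, argue that the corresponding student must obtain her fixed-pair school in any equilibrium, and induct. The first step is fine: since the type-$i^\star$ student has the strictly highest score at $j^\star$, the deviation to the single-school list $[j^\star]$ guarantees her $j^\star$, so (with strict preferences) her equilibrium outcome is $j^\star$. The gap is in your reduction to the sub-game. You justify it by saying that the type-$i^\star$ student ``proposes to $j^\star$ first and is immediately accepted, so her only effect is to occupy one seat of $j^\star$.'' But what the equilibrium condition pins down is only her \emph{outcome}, not her \emph{strategy}: her submitted list may rank other schools above $j^\star$ (schools that happen to reject her at the equilibrium profile), and nothing forces her to play the singleton $[j^\star]$. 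Under a unilateral deviation by one of the remaining $n-1$ students, a seat at such a school $j'$ can become available, in which case deferred acceptance leaves $i^\star$ at $j'$: she then occupies a seat at $j'$ rather than at $j^\star$. Consequently the original game with $i^\star$ present and playing her actual equilibrium list is \emph{not} payoff-equivalent, deviation by deviation, to the reduced market where $i^\star$ is deleted and $c_{j^\star}$ is lowered; the claim ``any profitable deviation in the sub-game would translate verbatim into a profitable deviation in the original game'' is exactly what fails, so you cannot conclude that the restriction of the equilibrium is an equilibrium of the sub-game, and the induction hypothesis cannot be invoked in the way you propose.

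The paper avoids this by never passing to a sub-game: it keeps analyzing the \emph{same} equilibrium of the original game and argues, at each step of the peeling, that the next fixed-pair student $i_k$ can secure her fixed school $j_k$ by a unilateral deviation, so her equilibrium outcome must be $j_k$; the ``removal'' of earlier pairs is only bookkeeping inside this analysis. To make your argument work you would need to replace the sub-game equivalence by a direct claim of this form, and then show that under $i_k$'s deviation to $[j_k]$ none of the earlier fixed-pair students (the only ones with higher scores at $j_k$, beyond those already assigned seats of $j_k$) can end up proposing to $j_k$ — which follows from the fixed-pair structure and the already-established equilibrium outcomes of the earlier students, but is an argument about the original game, not about a truncated one.
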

\begin{proof}
    For every Nash equilibrium, start the analysis by setting $A = [n]$ and $B = [m]$. From $\alpha$-reducibility we know that there is a fixed pair $(i,j) \in A\times B$. Student $i$ can ensure she is matched with her first choice $j$, thus this must be her outcome by definition of a Nash equilibrium. We remove $i$ from $A$, decrease the capacity of $j$ and remove it from $B$ if it reached $0$. We continue with the same reasoning by induction.
\end{proof}

\subsection{Reversed preferences}
\label{sec:constrained:reversed}

Consider a simple example with $n=2$ students and $m=2$ schools of capacity $c_1=c_2=1$. The set of types $T = [0,1]^2$ is two-dimensional. A student of type $(x,y)$ gives the value $v_1(x,y) = r+x$ to school 1, and the value $v_2(x,y) = r+y$ to school 2, where $r$ is a positive constant representing how risk-averse the students are. The preferences of schools are reversed, in the sense that a student of type $(x,y)$ has a score of $s_1(x,y) = y$ at school 1 and a score of $s_2(x,y) = x$ at school~2.
\Cref{fig:example-1.1} illustrates a situation with two stable matchings. This occurs with probability $1/2$ when $\mu$ is uniform over the diagonal $\{(x,y)\in T\,|\,x+y=1\}$, and with probability $1/6$ when the distribution $\mu$ is uniform over $T$.  

\begin{myfigure}
    \centering
    \begin{tikzpicture}[scale=3,line width=1,>=stealth]
        \stickfigure{2}{0.8}\node at(2,0.8) {$a$};
        \stickfigure{2}{0.2}\node at(2,0.2) {$b$};
        \draw[->] (2.2,0.7) -- (2.8,0.3);
        \draw[->] (2.2,0.3) -- (2.8,0.7);
        \draw[->] (2.8,0.8) -- (2.2,0.8);
        \draw[->] (2.8,0.2) -- (2.2,0.2);
        \begin{scope}[color=green!50!black]
            \fill[opacity=0.2] (0,0) -- (1,1) -- (1,0);
            \draw[dashed] (0,0.8) -- (1,0.8);
            \draw[dashed] (0,0.2) -- (1,0.2);
            \node[anchor=south,rotate=-90] at (1,0.5) {school 1};
            \stickhouse{3}{0.8}\node at(3,0.8) {1};
        \end{scope}
        \begin{scope}[color=blue!50!black]
            \draw[dashed] (0.3,0) -- (0.3,1);
            \draw[dashed] (0.6,0) -- (0.6,1);
            \fill[opacity=0.2] (0,0) -- (1,1) -- (0,1);
            \node[anchor=south] at (0.5,1) {school 2};
            \stickhouse{3}{0.2}\node at(3,0.2) {2};
        \end{scope}
        \draw[->] (0,0) -- (1.1,0);
        \draw[->] (0,0) -- (0,1.1);
        \draw (1,0) -- (1,1) -- (0,1);
        \node at (1.2,0) {$x$};
        \node at (0,1.2) {$y$};
        \fill (0.3,0.8) circle (0.03);
        \fill (0.6,0.2) circle (0.03);
        \node[anchor=north] at (0.3,0) {$x_a$};
        \node[anchor=north] at (0.6,0) {$x_b$};
        \node[anchor=east] at (0,0.8) {$y_a$};
        \node[anchor=east] at (0,0.2) {$y_b$};
    \end{tikzpicture}
    \caption{An example with two stable matchings: student~$a$ prefers school~2 (upper triangle $x_a < y_a$), student~$b$ prefers school 1 (lower triangle $y_b < x_b$), school~1 prefers student~$a$ (horizontal lines $y_b < y_a$), and school 2 prefers student $b$ (vertical lines $x_a < x_b$).}
    \label{fig:example-1.1}
\end{myfigure}


\begin{myfigure}[p!]
    \begin{center}
    \begin{tikzpicture}[xscale=2.5,yscale=1.5,line width=1,>=stealth]
        \draw[->] (0,0) -- (1.1,0);
        \draw[->] (0,0) -- (0,1.6);
        \draw (0.5,-.05) -- (0.5,.05);
        \draw (1,-.05) -- (1,.05);
        \draw (-.03,1) -- (.03,1);
        \draw (-.03,0.7) -- (.03,0.7);
        \draw[dotted] (0,1) -- (1,1);
        \draw[dotted] (0,0.7) -- (1,0.7);
        \draw[domain=0:0.5,smooth, variable=\x, green!50!black]
            plot ({\x}, {(0.7+\x)});
        \draw[domain=0.5:1,smooth, variable=\x, green!50!black]
            plot ({\x}, {(0.7+\x)*(0.5+1-\x)});
        \draw[domain=0:0.5,smooth, variable=\x, blue!50!black]
            plot ({\x}, {(0.7+1-\x)*(0.5+\x)});
        \draw[domain=0.5:1,smooth, variable=\x, blue!50!black]
            plot ({\x}, {(0.7+1-\x)});
        \node[anchor=east] at (-.02,1) {$1$};
        \node[anchor=east] at (-.02,0.7) {$r$};
        \node[anchor=west] at (1.1,0) {$x$};
        \fill[blue!50!black,opacity=0.5] (0,-.05) rectangle (0.5,.05);
        \fill[green!50!black,opacity=0.5] (0.5,-.05) rectangle (1,.05);
        \node at (.5,1.5) {$r < 1$};
    \end{tikzpicture}
    \begin{tikzpicture}[xscale=2.5,yscale=1.5,line width=1,>=stealth]
        \draw[->] (0,0) -- (1.1,0);
        \draw[->] (0,0) -- (0,1.6);
        \draw (1,-.05) -- (1,.05);
        \draw (-.03,1) -- (.03,1);
        \draw[dotted] (0,1) -- (1,1);
        \draw[domain=0:1,smooth, variable=\x, cyan!50!black]
            plot ({\x}, {(1+\x)*(1-\x/2)});
        \node[anchor=east] at (-.02,1) {$1$};
        \node[anchor=west] at (1.1,0) {$x$};
        \fill[cyan!50!black,opacity=0.5] (0,-.05) rectangle (1,.05);
        \node at (.5,1.5) {$r = 1$};
    \end{tikzpicture}
    \begin{tikzpicture}[xscale=2.5,yscale=1.5,line width=1,>=stealth]
        \draw[->] (0,0) -- (1.1,0);
        \draw[->] (0,0) -- (0,1.6);
        \draw (0.5,-.05) -- (0.5,.05);
        \draw (1,-.05) -- (1,.05);
        \draw (-.03,1) -- (.03,1);
        \draw (-.03,1.3) -- (.03,1.3);
        \draw[dotted] (0,1) -- (1,1);
        \draw[dotted] (0,1.3) -- (1,1.3);
        \draw[domain=0:0.5,smooth, variable=\x, green!50!black]
            plot ({\x}, {(1.3+\x)*(1-\x)});
        \draw[domain=0.5:1,smooth, variable=\x, green!50!black]
            plot ({\x}, {(1.3+\x)/2});
        \draw[domain=0:0.5,smooth, variable=\x, blue!50!black]
            plot ({\x}, {(1.3+1-\x)/2});
        \draw[domain=0.5:1,smooth, variable=\x, blue!50!black]
            plot ({\x}, {(1.3+1-\x)*(\x)});
        \node[anchor=east] at (-.02,1) {$1$};
        \node[anchor=east] at (-.02,1.3) {$r$};
        \node[anchor=west] at (1.1,0) {$x$};
        \fill[green!50!black,opacity=0.5] (0,-.05) rectangle (0.5,.05);
        \fill[blue!50!black,opacity=0.5] (0.5,-.05) rectangle (1,.05);
        \node at (.5,1.5) {$1 < r$};
    \end{tikzpicture}
    \qquad
    \begin{tikzpicture}[line width=1,scale=2]
        \useasboundingbox (0,-.5) rectangle (0.8,0.7);
        \fill[green!50!black,opacity=0.5] (0,0.4) rectangle (0.2,0.5);
        \fill[cyan!50!black,opacity=0.5] (0,0.2) rectangle (0.2,0.3);
        \fill[blue!50!black,opacity=0.5] (0,0) rectangle (0.2,0.1);
        \draw (0,0.4) rectangle (0.2,0.5);
        \draw (0,0.2) rectangle (0.2,0.3);
        \draw (0,0) rectangle (0.2,0.1);
        \node[anchor=west] at (0.2,0.45) {school 1};
        \node[anchor=west] at (0.2,0.25) {mixed};
        \node[anchor=west] at (0.2,0.05) {school 2};
    \end{tikzpicture}
    \end{center}
    \textbf{(a)} Family of equilibria when the distribution $\mu$ is uniform over the diagonal $\{(x,y)\in T\,|\,x+y=1\}$. The strategy of a student having type $(x,1-x)$ is represented by the color of the point on the $x$-axis. The different plots corresponds to the expected utility from each action, for a student of type $(x,1-x)$, when the other students play the equilibrium strategy.
    \begin{center}
    \begin{tikzpicture}[scale=2,line width=1,>=stealth]
        \useasboundingbox (-.2,-.1) rectangle (1.2,1.2);
        \fill[green!50!black,opacity=0.5] (0,0) -- (1,1) -- (1,0);
        \fill[blue!50!black,opacity=0.5] (0,0) -- (1,1) -- (0,1);
        \draw[->] (0,0) -- (1.1,0);
        \draw[->] (0,0) -- (0,1.1);
        \draw (0,0) -- (1,1);
        \draw (1,0) -- (1,1) -- (0,1);
        \node at (.5,1.1) {$r \leq 1/2$};
    \end{tikzpicture}
    \begin{tikzpicture}[scale=2,line width=1,>=stealth]
        \useasboundingbox (-.2,-.1) rectangle (1.2,1.2);
        \fill[cyan!50!black,opacity=0.5] (0,0) rectangle (.4,.4);
        \fill[blue!50!black,opacity=0.5] (0,.4) -- (.4,.4) -- (1,1) -- (0,1);
        \fill[green!50!black,opacity=0.5] (.4,0) -- (.4,.4) -- (1,1) -- (1,0);
        \draw[->] (0,0) -- (1.1,0);
        \draw[->] (0,0) -- (0,1.1);
        \draw (1,0) -- (1,1) -- (0,1);
        \draw (.4,0) -- (.4,.4) -- (0,.4);
        \draw (.4,.4) -- (1,1);
        \node[anchor=north] at (.4,0) {$2-1/r$};
        \node at (.5,1.1) {$1/2 \leq r \leq 1$};
    \end{tikzpicture}
    \begin{tikzpicture}[scale=2,line width=1,>=stealth]
        \useasboundingbox (-.2,-.1) rectangle (1.2,1.2);
        \fill[cyan!50!black,opacity=0.5] (0,0) rectangle (1,1);
        \draw[->] (0,0) -- (1.1,0);
        \draw[->] (0,0) -- (0,1.1);
        \draw (1,0) -- (1,1) -- (0,1);
        \node at (.5,1.1) {$r = 1$};
    \end{tikzpicture}
    \begin{tikzpicture}[scale=2,line width=1,>=stealth]
        \useasboundingbox (-.2,-.1) rectangle (1.2,1.2);
        \fill[cyan!50!black,opacity=0.5] (0,0) rectangle (.4,.4);
        \fill[green!50!black,opacity=0.5] (0,.4) -- (.4,.4) -- (1,1) -- (0,1);
        \fill[blue!50!black,opacity=0.5] (.4,0) -- (.4,.4) -- (1,1) -- (1,0);
        \draw[->] (0,0) -- (1.1,0);
        \draw[->] (0,0) -- (0,1.1);
        \draw (1,0) -- (1,1) -- (0,1);
        \draw (.4,0) -- (.4,.4) -- (0,.4);
        \draw (.4,.4) -- (1,1);
        \node[anchor=north] at (.4,0) {$1/r$};
        \node at (.5,1.1) {$1 \leq r$};
    \end{tikzpicture}
    \begin{tikzpicture}[line width=1,scale=2]
        \useasboundingbox (0,-.5) rectangle (0.8,0.7);
        \fill[green!50!black,opacity=0.5] (0,0.4) rectangle (0.2,0.5);
        \fill[cyan!50!black,opacity=0.5] (0,0.2) rectangle (0.2,0.3);
        \fill[blue!50!black,opacity=0.5] (0,0) rectangle (0.2,0.1);
        \draw (0,0.4) rectangle (0.2,0.5);
        \draw (0,0.2) rectangle (0.2,0.3);
        \draw (0,0) rectangle (0.2,0.1);
        \node[anchor=west] at (0.2,0.45) {school 1};
        \node[anchor=west] at (0.2,0.25) {mixed};
        \node[anchor=west] at (0.2,0.05) {school 2};
    \end{tikzpicture}
    \end{center}
    \textbf{(b)} Family of equilibria when the distribution $\mu$ is uniform over $T$. The strategy of a student having type $(x,y)$ is represented by the color of the point at those coordinates. 
    \caption{Family of symmetric Bayes-Nash equilibria when $\ell=1$ in the game described in \cref{fig:example-1.1}. When students are not risk averse ($r < 1$), they tend to apply according to their preferences. When students are risk averse ($r > 1$), they tend to apply according to their chance of being accepted.}
    \label{fig:example-1.2}
\end{myfigure}


\begin{myfigure}[p!]
    \begin{center}
    \begin{tikzpicture}[xscale=2,yscale=2.5,line width=1,>=stealth]
        \draw (0,0.6) -- (0,0.8);
        \draw (-.05,0.78) -- (0.05, 0.82);
        \draw (-.05,0.82) -- (0.05, 0.86);
        \draw[->] (0,0.84) -- (0,1.6);
        \begin{scope}[yshift=0.6cm]
            \draw[->] (0,0) -- (1.1,0);
            \draw (0.5,-.05) -- (0.5,.05);
            \draw (1,-.05) -- (1,.05);
            \fill[blue!50!black,opacity=0.5] (0,-.05) rectangle (0.5,.05);
            \fill[green!50!black,opacity=0.5] (0.5,-.05) rectangle (1,.05);
        \end{scope}
        \draw[dotted] (0,1) -- (1,1);
        \draw[domain=0:0.5,smooth, variable=\x, green!50!black]
            plot ({\x}, {(1+\x)});
        \draw[domain=0.5:1,smooth, variable=\x, green!50!black]
            plot ({\x}, {(1+\x)*(0.5+1-\x)});
        \draw[domain=0:0.5,smooth, variable=\x, blue!50!black]
            plot ({\x}, {(1+1-\x)*(0.5+\x)});
        \draw[domain=0.5:1,smooth, variable=\x, blue!50!black]
            plot ({\x}, {(1+1-\x)});
        \node[anchor=east] at (-.0,1) {$1$};
    \end{tikzpicture}
    \begin{tikzpicture}[xscale=2,yscale=2.5,line width=1,>=stealth]
        \def\t{0.2}
        \draw (0,0.6) -- (0,0.8);
        \draw (-.05,0.78) -- (0.05, 0.82);
        \draw (-.05,0.82) -- (0.05, 0.86);
        \draw[->] (0,0.84) -- (0,1.6);
        \begin{scope}[yshift=0.6cm]
            \draw[->] (0,0) -- (1.1,0);
            \fill[cyan!50!black,opacity=0.5] (0,-.05) rectangle (\t,.05);
            \fill[blue!50!black,opacity=0.5] (\t,-.05) rectangle (0.5,.05);
            \fill[green!50!black,opacity=0.5] (0.5,-.05) rectangle (1-\t,.05);
            \fill[cyan!50!black,opacity=0.5] (1-\t,-.05) rectangle (1,.05);
            \draw (\t,-.05) -- (\t,.05);
            \draw (0.5,-.05) -- (0.5,.05);
            \draw (1-\t,-.05) -- (1-\t,.05);
            \draw (1,-.05) -- (1,.05);
        \end{scope}
        \draw[dotted] (0,1) -- (1,1);
        \draw[domain=\t:0.5,smooth, variable=\x, green!50!black]
            plot ({\x}, {(1+\x)*(1-\t/2)});
        \draw[domain=\t:0.5,smooth, variable=\x, blue!50!black]
            plot ({\x}, {(1+1-\x)*(1/2-\t/2+\x)});
        \draw[domain=0.5:1-\t,smooth, variable=\x, green!50!black]
            plot ({\x}, {(1+\x)*(3/2-\t/2-\x)});
        \draw[domain=0.5:1-\t,smooth, variable=\x, blue!50!black]
            plot ({\x}, {(1+1-\x)*(1-\t/2)});
        \draw[domain=0:\t,smooth, variable=\x, cyan!50!black]
            plot ({\x}, {(1+\x)*(1-\x/2)});
        \draw[domain=1-\t:1,smooth, variable=\x, cyan!50!black]
            plot ({\x}, {(1+\x)*(1-\x/2)});
        \node[anchor=east] at (-.0,1) {$1$};
    \end{tikzpicture}
    \begin{tikzpicture}[xscale=2,yscale=2.5,line width=1,>=stealth]
        \draw (0,0.6) -- (0,0.8);
        \draw (-.05,0.78) -- (0.05, 0.82);
        \draw (-.05,0.82) -- (0.05, 0.86);
        \draw[->] (0,0.84) -- (0,1.6);
        \begin{scope}[yshift=0.6cm]
            \draw[->] (0,0) -- (1.1,0);
            \fill[cyan!50!black,opacity=0.5] (0,-.05) rectangle (1,.05);
            \draw (1,-.05) -- (1,.05);
        \end{scope}
        \draw[dotted] (0,1) -- (1,1);
        \draw[domain=0:1,smooth, variable=\x, cyan!50!black]
            plot ({\x}, {(1+\x)*(1-\x/2)});
        \node[anchor=east] at (-.0,1) {$1$};
    \end{tikzpicture}
    \begin{tikzpicture}[xscale=2,yscale=2.5,line width=1,>=stealth]
        \def\t{0.2}
        \draw (0,0.6) -- (0,0.8);
        \draw (-.05,0.78) -- (0.05, 0.82);
        \draw (-.05,0.82) -- (0.05, 0.86);
        \draw[->] (0,0.84) -- (0,1.6);
        \begin{scope}[yshift=0.6cm]
            \draw[->] (0,0) -- (1.1,0);
            \draw (\t,-.05) -- (\t,.05);
            \draw (0.5,-.05) -- (0.5,.05);
            \draw (1-\t,-.05) -- (1-\t,.05);
            \draw (1,-.05) -- (1,.05);
            \fill[cyan!50!black,opacity=0.5] (0,-.05) rectangle (\t,.05);
            \fill[green!50!black,opacity=0.5] (\t,-.05) rectangle (0.5,.05);
            \fill[blue!50!black,opacity=0.5] (0.5,-.05) rectangle (1-\t,.05);
            \fill[cyan!50!black,opacity=0.5] (1-\t,-.05) rectangle (1,.05);
        \end{scope}
        \draw[dotted] (0,1) -- (1,1);
        \draw[domain=\t:0.5,smooth, variable=\x, green!50!black]
            plot ({\x}, {(1+\x)*(1+\t/2-\x)});
        \draw[domain=\t:0.5,smooth, variable=\x, blue!50!black]
            plot ({\x}, {(1+1-\x)*(1/2+\t/2)});
        \draw[domain=0.5:1-\t,smooth, variable=\x, green!50!black]
            plot ({\x}, {(1+\x)*(1/2+\t/2)});
        \draw[domain=0.5:1-\t,smooth, variable=\x, blue!50!black]
            plot ({\x}, {(1+1-\x)*(\t/2+\x)});
        \draw[domain=0:\t,smooth, variable=\x, cyan!50!black]
            plot ({\x}, {(1+\x)*(1-\x/2)});
        \draw[domain=1-\t:1,smooth, variable=\x, cyan!50!black]
            plot ({\x}, {(1+\x)*(1-\x/2)});
        \node[anchor=east] at (-.0,1) {$1$};
    \end{tikzpicture}
    \begin{tikzpicture}[xscale=2,yscale=2.5,line width=1,>=stealth]
        \draw (0,0.6) -- (0,0.8);
        \draw (-.05,0.78) -- (0.05, 0.82);
        \draw (-.05,0.82) -- (0.05, 0.86);
        \draw[->] (0,0.84) -- (0,1.6);
        \begin{scope}[yshift=0.6cm]
            \draw[->] (0,0) -- (1.1,0);
            \draw (0.5,-.05) -- (0.5,.05);
            \draw (1,-.05) -- (1,.05);
            \fill[green!50!black,opacity=0.5] (0,-.05) rectangle (0.5,.05);
            \fill[blue!50!black,opacity=0.5] (0.5,-.05) rectangle (1,.05);
        \end{scope}
        \draw[dotted] (0,1) -- (1,1);
        \draw[domain=0:0.5,smooth, variable=\x, green!50!black]
            plot ({\x}, {(1+\x)*(1-\x)});
        \draw[domain=0:0.5,smooth, variable=\x, blue!50!black]
            plot ({\x}, {(1+1-\x)/2});
        \draw[domain=0.5:1,smooth, variable=\x, green!50!black]
            plot ({\x}, {(1+\x)/2});
        \draw[domain=0.5:1,smooth, variable=\x, blue!50!black]
            plot ({\x}, {(1+1-\x)*(\x)});
        \node[anchor=east] at (-.0,1) {$1$};
    \end{tikzpicture}
    \end{center}
    \textbf{(a)} Five different equilibria when the distribution $\mu$ is uniform over the diagonal $\{(x,y)\in T\,|\,x+y=1\}$. Students' utility induce a total ordering over the family of equilibria, where the leftmost equilibrium is student optimal and the rightmost equilibrium is student pessimal.
    \begin{center}
    \begin{tikzpicture}[>=stealth,scale=1.4,line width=1]
        \fill[cyan!50!black,opacity=0.5] (0,0) rectangle (1,1);
        \draw[->] (0,0) -- (1.1,0);
        \draw[->] (0,0) -- (0,1.1);
        \draw (1,0) -- (1,1) -- (0,1);
    \end{tikzpicture}
    \begin{tikzpicture}[>=stealth,scale=1.4,line width=1]
        \begin{scope}[green!50!black,opacity=0.5]
            \fill (0,0) -- (1,1) -- (1,0.5) -- (0.5,0);
            \fill (0,1) -- (0.5,1) -- (0,0.5);
        \end{scope}
        \begin{scope}[blue!50!black,opacity=0.5]
            \fill (0,0) -- (1,1) -- (0.5,1) -- (0,0.5);
            \fill (1,0) -- (1,0.5) -- (0.5,0);
        \end{scope}
        \draw[->] (0,0) -- (1.1,0);
        \draw[->] (0,0) -- (0,1.1);
        \draw (1,0) -- (1,1) -- (0,1);
    \end{tikzpicture}
    \begin{tikzpicture}[>=stealth,scale=1.4,line width=1]
        \begin{scope}[blue!50!black,opacity=0.5]
            \fill (0,0) -- (1,1) -- (1,0.5) -- (0.5,0);
            \fill (0,1) -- (0.5,1) -- (0,0.5);
        \end{scope}
        \begin{scope}[green!50!black,opacity=0.5]
            \fill (0,0) -- (1,1) -- (0.5,1) -- (0,0.5);
            \fill (1,0) -- (1,0.5) -- (0.5,0);
        \end{scope}
        \draw[->] (0,0) -- (1.1,0);
        \draw[->] (0,0) -- (0,1.1);
        \draw (1,0) -- (1,1) -- (0,1);
    \end{tikzpicture}
    \begin{tikzpicture}[>=stealth,scale=1.4,line width=1]
        \begin{scope}[green!50!black,opacity=0.5]
            \fill (0.5,0) rectangle (1,0.5);
            \fill (0,0.5) rectangle (0.5,1);
        \end{scope}
        \begin{scope}[blue!50!black,opacity=0.5]
            \fill (0,0) rectangle (0.5,0.5);
            \fill (0.5,0.5) rectangle (1,1);
        \end{scope}
        \draw[->] (0,0) -- (1.1,0);
        \draw[->] (0,0) -- (0,1.1);
        \draw (1,0) -- (1,1) -- (0,1);
    \end{tikzpicture}
    \begin{tikzpicture}[>=stealth,scale=1.4,line width=1]
        \begin{scope}[blue!50!black,opacity=0.5]
            \fill (0.5,0) rectangle (1,0.5);
            \fill (0,0.5) rectangle (0.5,1);
        \end{scope}
        \begin{scope}[green!50!black,opacity=0.5]
            \fill (0,0) rectangle (0.5,0.5);
            \fill (0.5,0.5) rectangle (1,1);
        \end{scope}
        \draw[->] (0,0) -- (1.1,0);
        \draw[->] (0,0) -- (0,1.1);
        \draw (1,0) -- (1,1) -- (0,1);
    \end{tikzpicture}
    \begin{tikzpicture}[>=stealth,scale=1.4,line width=1]
        \begin{scope}[green!50!black,opacity=0.5]
            \fill (0.5,0.5)--(0,0)--(0,0.5);
            \fill (0.5,0.5)--(0,1)--(0.5,1);
            \fill (0.5,0.5)--(1,1)--(1,0.5);
            \fill (0.5,0.5)--(1,0)--(0.5,0);
        \end{scope}
        \begin{scope}[blue!50!black,opacity=0.5]
            \fill (0.5,0.5)--(0,0)--(0.5,0);
            \fill (0.5,0.5)--(1,0)--(1,0.5);
            \fill (0.5,0.5)--(1,1)--(0.5,1);
            \fill (0.5,0.5)--(0,1)--(0,0.5);
        \end{scope}
        \draw[->] (0,0) -- (1.1,0);
        \draw[->] (0,0) -- (0,1.1);
        \draw (1,0) -- (1,1) -- (0,1);
    \end{tikzpicture}
    \begin{tikzpicture}[>=stealth,scale=1.4,line width=1]
        \begin{scope}[blue!50!black,opacity=0.5]
            \fill (0.5,0.5)--(0,0)--(0,0.5);
            \fill (0.5,0.5)--(0,1)--(0.5,1);
            \fill (0.5,0.5)--(1,1)--(1,0.5);
            \fill (0.5,0.5)--(1,0)--(0.5,0);
        \end{scope}
        \begin{scope}[green!50!black,opacity=0.5]
            \fill (0.5,0.5)--(0,0)--(0.5,0);
            \fill (0.5,0.5)--(1,0)--(1,0.5);
            \fill (0.5,0.5)--(1,1)--(0.5,1);
            \fill (0.5,0.5)--(0,1)--(0,0.5);
        \end{scope}
        \draw[->] (0,0) -- (1.1,0);
        \draw[->] (0,0) -- (0,1.1);
        \draw (1,0) -- (1,1) -- (0,1);
    \end{tikzpicture}
    \begin{tikzpicture}[>=stealth,scale=1.4,line width=1]
        \begin{scope}[scale=.5]
            \fill[cyan!50!black,opacity=0.5] (0,0) rectangle (1,1);
        \end{scope}
        \begin{scope}[green!50!black,opacity=0.5]
            \begin{scope}[scale=.5,shift={(1,1)}]
                \fill (0,0) -- (1,1) -- (0.5,1) -- (0,0.5);
                \fill (1,0) -- (1,0.5) -- (0.5,0);
            \end{scope}
            \begin{scope}[scale=.5,shift={(0,1)}]
                \fill (0.5,0.5) rectangle (1,1);
                \fill (0,0) rectangle (0.5,0.5);
            \end{scope}
            \begin{scope}[scale=.5,shift={(1,0)}]
                \fill (0.5,0.5)--(0,0)--(0,0.5);
                \fill (0.5,0.5)--(0,1)--(0.5,1);
                \fill (0.5,0.5)--(1,1)--(1,0.5);
                \fill (0.5,0.5)--(1,0)--(0.5,0);
            \end{scope}
        \end{scope}
        \begin{scope}[blue!50!black,opacity=0.5]
            \begin{scope}[scale=.5,shift={(1,1)}]
                \fill (0,0) -- (1,1) -- (1,0.5) -- (0.5,0);
                \fill (0,1) -- (0.5,1) -- (0,0.5);
            \end{scope}
            \begin{scope}[scale=.5,shift={(0,1)}]
                \fill (0.5,0) rectangle (1,0.5);
                \fill (0,0.5) rectangle (0.5,1);
            \end{scope}
            \begin{scope}[scale=.5,shift={(1,0)}]
                \fill (0.5,0.5)--(0,0)--(0.5,0);
                \fill (0.5,0.5)--(1,0)--(1,0.5);
                \fill (0.5,0.5)--(1,1)--(0.5,1);
                \fill (0.5,0.5)--(0,1)--(0,0.5);
            \end{scope}
        \end{scope}
        \draw[->] (0,0) -- (1.1,0);
        \draw[->] (0,0) -- (0,1.1);
        \draw (1,0) -- (1,1) -- (0,1);
    \end{tikzpicture}
    \end{center}
    \textbf{(b)} Eight different equilibria when the distribution $\mu$ is uniform over $T$. Notice that for every fixed $x$ (or $y$), the fractions of types applying to schools 1 and 2 is the same. Thus, in each equilibrium, a student of type $(x,y)$ is accepted to school $1$ (resp. $2$) with probability $(1+y)/2$ (resp. $(1+x)/2$) and her expected utility is $(1+x)(1+y)/2$ for both schools.
    \caption{Multiple Bayes-Nash equilibria when $\ell=1$ and $r=1$ in the game described in \cref{fig:example-1.1}. Such phenomenon can be explained by the multiplicity of stable matchings (see panel a), or by a ``purification theorem'' type of argument (see panel b).
    }
    \label{fig:example-1.3}
\end{myfigure}


\begin{myfigure}[p!]
    \centering
    \begin{tikzpicture}[scale=3,line width=1,>=stealth]
        \node at (0.55,1.1) {\small $s(x,y) = \sqrt{x^2 + y^2}$};
        \stickfigure{2}{0.9}\node at(2,0.9) {a};
        \stickfigure{2}{0.5}\node at(2,0.5) {b};
        \stickfigure{2}{0.1}\node at(2,0.1) {c};
        \draw (2.2,0.9) -- (2.8,0.3);
        \draw (2.2,0.5) -- (2.8,0.8);
        \draw (2.2,0.1) -- (2.8,0.2);
        \begin{scope}[color=green!50!black]
            \stickhouse{3}{0.8}\node at(3,0.8) {1};
            \node[anchor=north] at (0.5,0) {math affinity};
            \fill[opacity=0.2] (0,0) -- (1,1) -- (1,0);
            \node[anchor=west] at(3.2,0.95) {$b \succ c \succ a$};
            \node[anchor=west] at(3.2,0.80) {$v_1(x,y) = r+x$};
            \node[anchor=west] at(3.2,0.65) {$c_1 = 1$};
        \end{scope}
        \begin{scope}[color=blue!50!black]
            \stickhouse{3}{0.2}\node at(3,0.2) {2};
            \node[anchor=north, rotate=-90] at (0,0.5) {art affinity};
            \fill[opacity=0.2] (0,0) -- (1,1) -- (0,1);
            \node[anchor=west] at(3.2,0.35) {$b \succ c \succ a$};
            \node[anchor=west] at(3.2,0.20) {$v_2(x,y) = r+y$};
            \node[anchor=west] at(3.2,0.05) {$c_2 = 2$};
        \end{scope}
        \draw[->] (0,0) -- (1.1,0);
        \draw[->] (0,0) -- (0,1.1);
        \draw (1,0) -- (1,1) -- (0,1);
        \foreach \r in {0,0.1,...,1}
            \draw[gray, dashed] (\r,0) arc (0:90:\r);
        \foreach \r in {1,1.1,...,1.5}
            \draw[gray, dashed] (1,{sqrt(\r*\r-1)})
            arc ({acos(1/\r)}:{asin(1/\r)}:\r);
        \fill (20:0.7) circle (0.03);
        \fill (40:1.1) circle (0.03);
        \fill (70:0.8) circle (0.03);
        \node at (15:0.75) {a};
        \node at (36:1.16) {b};
        \node at (75:0.85) {c};
        \node at (1.2,0) {$x$};
        \node at (0,1.2) {$y$};
    \end{tikzpicture}
    \medbreak
    \begin{tikzpicture}[scale=2,line width=1,>=stealth]
        \useasboundingbox (-.2,-.1) rectangle (1.2,1.2);
        \fill[green!50!black,opacity=0.5,
                yshift=1cm,xscale=0.0333,yscale=-0.0333] svg "
                M 1.50,859.00
                C 136.50,780.00 475.50,592.00 585.67,447.33
                653.33,368.00 664.00,208.00 850.00,1.50
                L 850,0 L 850,850 L 0,850";
        \fill[blue!50!black,opacity=0.5,
                yshift=1cm,xscale=0.0333,yscale=-0.0333] svg "
                M 1.50,859.00
                C 136.50,780.00 475.50,592.00 585.67,447.33
                653.33,368.00 664.00,208.00 850.00,1.50
                L 0,0 L 0,850";
        \draw[->] (0,0) -- (1.1,0);
        \draw[->] (0,0) -- (0,1.1);
        \draw (1,0) -- (1,1) -- (0,1);
        \node at (.5,1.1) {$r = 0$};
    \end{tikzpicture}
    \begin{tikzpicture}[scale=2,line width=1,>=stealth]
        \useasboundingbox (-.2,-.1) rectangle (1.2,1.2);
        \fill[green!50!black,opacity=0.5,
                yshift=1cm,xscale=0.0333,yscale=-0.0333] svg "
                M 40.00,859.00
                C 263.00,679.00 519.00,722.00 654.00,537.00
                730.00,436.00 683.00,184.00 848.00,4.00
                L 850,0 L 850,850 L 0,850";
        \fill[blue!50!black,opacity=0.5,
                yshift=1cm,xscale=0.0333,yscale=-0.0333] svg "
                M 40.00,859.00
                C 263.00,679.00 519.00,722.00 654.00,537.00
                730.00,436.00 683.00,184.00 848.00,4.00
                L 850,0 L 0,0 L 0,850";
        \draw[->] (0,0) -- (1.1,0);
        \draw[->] (0,0) -- (0,1.1);
        \draw (1,0) -- (1,1) -- (0,1);
        \node at (.5,1.1) {$r=1$};
    \end{tikzpicture}
    \begin{tikzpicture}[scale=2,line width=1,>=stealth]
        \useasboundingbox (-.2,-.1) rectangle (1.2,1.2);
        \fill[green!50!black,opacity=0.5,
                yshift=1cm,xscale=0.0333,yscale=-0.0333] svg "
                M 0.50 859.50
                C 48.00 825.50 202.00 688.50 369.50 668.00
                537.00 647.50 618.00 661.00 691.00 580.00
                764.00 503.00 709.25 145.25 851.75 1.25
                L 850,0 L 850,850 L 0,850";
        \fill[blue!50!black,opacity=0.5,
                yshift=1cm,xscale=0.0333,yscale=-0.0333] svg "
                M 0.50 859.50
                C 48.00 825.50 202.00 688.50 369.50 668.00
                537.00 647.50 618.00 661.00 691.00 580.00
                764.00 503.00 709.25 145.25 851.75 1.25
                L 850,0 L 0,0 L 0,850";
        \draw[->] (0,0) -- (1.1,0);
        \draw[->] (0,0) -- (0,1.1);
        \draw (1,0) -- (1,1) -- (0,1);
        \node at (.5,1.1) {$r = 2$};
    \end{tikzpicture}
    \begin{tikzpicture}[scale=2,line width=1,>=stealth]
        \useasboundingbox (-.2,-.1) rectangle (1.2,1.2);
        \fill[green!50!black,opacity=0.5,
                yshift=1cm,xscale=0.0333,yscale=-0.0333] svg "
                M 2.00 856.00
                C 41.00 831.00 176.00 655.00 345.00 617.00
                580.00 570.00 668.25 667.00 743.50 609.75
                812.00 549.00 744.25 100.00 849.50 2.25
                L 850,0 L 850,850 L 0,850";
        \fill[blue!50!black,opacity=0.5,
                yshift=1cm,xscale=0.0333,yscale=-0.0333] svg "
                M 2.00 856.00
                C 41.00 831.00 176.00 655.00 345.00 617.00
                580.00 570.00 668.25 667.00 743.50 609.75
                812.00 549.00 744.25 100.00 849.50 2.25
                L 850,0 L 0,0 L 0,850";
        \draw[->] (0,0) -- (1.1,0);
        \draw[->] (0,0) -- (0,1.1);
        \draw (1,0) -- (1,1) -- (0,1);
        \node at (.5,1.1) {$r=5$};
    \end{tikzpicture}
    \begin{tikzpicture}[line width=1,scale=2]
        \useasboundingbox (0,-.2) rectangle (0.8,0.7);
        \fill[green!50!black,opacity=0.5] (0,0.3) rectangle (0.2,0.4);
        \fill[blue!50!black,opacity=0.5] (0,0.1) rectangle (0.2,0.2);
        \draw (0,0.3) rectangle (0.2,0.4);
        \draw (0,0.1) rectangle (0.2,0.2);
        \node[anchor=west] at (-.1,0.55) {Equilibrium:};
        \node[anchor=west] at (0.2,0.35) {school 1};
        \node[anchor=west] at (0.2,0.15) {school 2};
    \end{tikzpicture}
    \caption{Example where preferences of schools are aligned (scoring functions are equal to $s$). The two dimensions of a student's type can be thought as her affinity with maths and with art. Schools sort student by decreasing order of euclidean norm. School 1 is a math school, and is preferred by students with a good affinity with math. School 2 is an art school, and is preferred by students with a good affinity with art. Risk aversion is modeled with the parameter $r \geq 0$. 
    Each student can apply to $\ell=1$ school, and the strategy of a student having type $(x,y)$ is represented by the color of the point at those coordinates.}
    \label{fig:example-2}
\end{myfigure}

\begin{myfigure}[p!]
    \centering
    \begin{tikzpicture}[scale=3,line width=1,>=stealth]
        \stickfigure{2}{0.9}\node at(2,0.9) {a};
        \stickfigure{2}{0.5}\node at(2,0.5) {b};
        \stickfigure{2}{0.1}\node at(2,0.1) {c};
        \draw (2.2,0.9) -- (2.8,0.3);
        \draw (2.2,0.5) -- (2.8,0.8);
        \draw (2.2,0.1) -- (2.8,0.2);
        \draw[->] (0,0) -- (1.1,0);
        \draw[->] (0,0) -- (0,1.1);
        \draw (1,0) -- (1,1) -- (0,1);
        \node[anchor=north] at (0.5,0) {math grade};
        \node[anchor=north, rotate=-90] at (0,0.5) {art grade};
        \begin{scope}[color=green!50!black]
            \draw[dashed] (0.8,0.7) -- (0.65,1);
            \draw[dashed] (0.6,0.4) -- (0.3,1);
            \draw[dashed] (0.2,0.8) -- (0.1,1);
            \node[anchor=south] at (0.5,1) {$s_1(t) = z_1 \cdot t$};
            \node at (0.5, 0.2) {$z_1$};
            \draw[->] (0,0) -- (0.4, 0.2); 
            \stickhouse{3}{0.8}\node at(3,0.8) {1};
            \node[anchor=west] at(3.2,0.95) {$b \succ c \succ a$};
            \node[anchor=west] at(3.2,0.80) {$v_1(x,y) = r+1$};
            \node[anchor=west] at(3.2,0.65) {$c_1 = 1$};
        \end{scope}
        \begin{scope}[color=blue!50!black]
            \draw[dashed] (0.8,0.7) -- (1,0.6);
            \draw[dashed] (0.6,0.4) -- (1,0.2);
            \draw[dashed] (0.2,0.8) -- (1,0.4);
            \node[anchor=south,rotate=-90] at (1,0.5) {$s_2(t) = z_2 \cdot t$};
            \node at (0.2, 0.5) {$z_2$};
            \draw[->] (0,0) -- (0.2, 0.4); 
            \stickhouse{3}{0.2}\node at(3,0.2) {2};
            \node[anchor=west] at(3.2,0.35) {$b \succ a \succ c$};
            \node[anchor=west] at(3.2,0.20) {$v_2(x,y) = r$};
            \node[anchor=west] at(3.2,0.05) {$c_2 = 2$};
        \end{scope}
        \fill (0.8,0.7) circle (0.03);
        \fill (0.6,0.4) circle (0.03);
        \fill (0.2,0.8) circle (0.03);
        \node at (0.1,0.8) {a};
        \node at (0.7,0.7) {b};
        \node at (0.5,0.4) {c};
        \node at (1.2,0) {$x$};
        \node at (0,1.2) {$y$};
    \end{tikzpicture}
    \medbreak
    \begin{tikzpicture}[scale=2,line width=1,>=stealth]
        \useasboundingbox (-.2,-.1) rectangle (1.2,1.2);
        \fill[green!50!black,opacity=0.5,
                yshift=1cm,xscale=0.0403,yscale=-0.0403] svg "
                M 0.09,701.88
                C 212.00,688.00 326.18,698.09 340.00,675.00
                350.91,657.73 354.75,619.92 35.75,0.25
                L 702,0 L 707,702 L 0,702";
        \fill[blue!50!black,opacity=0.5,
                yshift=1cm,xscale=0.0403,yscale=-0.0403] svg "
                M 0.09,701.88
                C 212.00,688.00 326.18,698.09 340.00,675.00
                350.91,657.73 354.75,619.92 35.75,0.25
                L 702,0 L 0,0 L 0,702";
        \draw[->] (0,0) -- (1.1,0);
        \draw[->] (0,0) -- (0,1.1);
        \draw (1,0) -- (1,1) -- (0,1);
        \node at (.5,1.1) {$r = 1/10$};
    \end{tikzpicture}
    \begin{tikzpicture}[scale=2,line width=1,>=stealth]
        \useasboundingbox (-.2,-.1) rectangle (1.2,1.2);
        \fill[green!50!black,opacity=0.5,yshift=1cm,xscale=0.0403,
                yscale=-0.0403] svg "
                M 0.00,702.00
                C 201.33,613.33 465.33,632.00 509.50,533.50
                554.67,432.00 399.33,151.33 322.00,0.33
                L 702,0 L 707,702 L 0,702";
        \fill[blue!50!black,opacity=0.5,yshift=1cm,xscale=0.0403,
                yscale=-0.0403] svg "
                M 0.00,702.00
                C 201.33,613.33 465.33,632.00 509.50,533.50
                554.67,432.00 399.33,151.33 322.00,0.33
                L 702,0 L 0,0 L 0,702";
        \draw[->] (0,0) -- (1.1,0);
        \draw[->] (0,0) -- (0,1.1);
        \draw (1,0) -- (1,1) -- (0,1);
        \node at (.5,1.1) {$r=1$};
    \end{tikzpicture}
    \begin{tikzpicture}[scale=2,line width=1,>=stealth]
        \useasboundingbox (-.2,-.1) rectangle (1.2,1.2);
        \fill[green!50!black,opacity=0.5,
                yshift=1cm,xscale=0.0403,yscale=-0.0403] svg "
                M 0.09,701.73
                C 192.00,535.33 389.70,560.30 520.36,485.64
                553.45,464.00 560.50,456.25 600.00,379.00
                636.00,316.67 656.27,292.33 657.27,252.00
                655.82,206.00 646.00,185.50 551.82,0.36
                L 702,0 L 707,702 L 0,702";
        \fill[blue!50!black,opacity=0.5,
                yshift=1cm,xscale=0.0403,yscale=-0.0403] svg "
                M 0.09,701.73
                C 192.00,535.33 389.70,560.30 520.36,485.64
                553.45,464.00 560.50,456.25 600.00,379.00
                636.00,316.67 656.27,292.33 657.27,252.00
                655.82,206.00 646.00,185.50 551.82,0.36
                L 702,0 L 0,0 L 0,702";
        \draw[->] (0,0) -- (1.1,0);
        \draw[->] (0,0) -- (0,1.1);
        \draw (1,0) -- (1,1) -- (0,1);
        \node at (.5,1.1) {$r = 10$};
    \end{tikzpicture}
    \begin{tikzpicture}[scale=2,line width=1,>=stealth]
        \useasboundingbox (-.2,-.1) rectangle (1.2,1.2);
        \fill[green!50!black,opacity=0.5,
                yshift=1cm,xscale=0.0403,yscale=-0.0403] svg "
                M 0.09,701.91
                C 174.67,517.67 417.67,549.00 533.67,450.33
                559.27,429.09 572.98,393.32 602.25,345.50
                628.50,299.50 690.50,244.00 702.50,105.00
                704.25,82.75 664.75,20.50 655.38,0.50
                L 702,0 L 707,702 L 0,702";
        \fill[blue!50!black,opacity=0.5,
                yshift=1cm,xscale=0.0403,yscale=-0.0403] svg "
                M 0.09,701.91
                C 174.67,517.67 417.67,549.00 533.67,450.33
                559.27,429.09 572.98,393.32 602.25,345.50
                628.50,299.50 690.50,244.00 702.50,105.00
                704.25,82.75 664.75,20.50 655.38,0.50
                L 702,0 L 0,0 L 0,702";
        \draw[->] (0,0) -- (1.1,0);
        \draw[->] (0,0) -- (0,1.1);
        \draw (1,0) -- (1,1) -- (0,1);
        \node at (.5,1.1) {$r=100$};
    \end{tikzpicture}
    \begin{tikzpicture}[line width=1,scale=2]
        \useasboundingbox (0,-.2) rectangle (0.8,0.7);
        \fill[green!50!black,opacity=0.5] (0,0.3) rectangle (0.2,0.4);
        \fill[blue!50!black,opacity=0.5] (0,0.1) rectangle (0.2,0.2);
        \draw (0,0.3) rectangle (0.2,0.4);
        \draw (0,0.1) rectangle (0.2,0.2);
        \node[anchor=west] at (-.1,0.55) {Equilibrium:};
        \node[anchor=west] at (0.2,0.35) {school 1};
        \node[anchor=west] at (0.2,0.15) {school 2};
    \end{tikzpicture}
    \caption{Example where preferences of students are aligned (value functions are constant). The two dimensions of a student's type can be thought as her grades in maths and in art. School 1 has one seat and gives more importance to the math grade. School 2 has two seats and gives more importance to the art grade. Risk aversion is modeled with the parameter $r \geq 0$. 
    Each student can apply to $\ell=1$ school, and the strategy of a student having type $(x,y)$ is represented by the color of the point at those coordinates.}
    \label{fig:example-3}
\end{myfigure}

\begin{myfigure}[p!]
    \begin{center}
    \begin{tikzpicture}[scale=3,line width=1,>=stealth]
        \draw[->] (0,0) -- (2.1,0);
        \draw (0,-.03) -- (0,.03);
        \draw (2,-.03) -- (2,.03);
        \fill (0.6,0) circle (0.02);
        \fill (1.2,0) circle (0.02);
        \fill (1.6,0) circle (0.02);
        \stickfigure{0.6}{0.2}\node at(0.6,0.2) {a};
        \stickfigure{1.2}{0.2}\node at(1.2,0.2) {b};
        \stickfigure{1.6}{0.2}\node at(1.6,0.2) {c};
        \node at (0,0.1) {0};
        \node at (2,0.1) {1};
        \node at (1,0.5) {$n=3$ students, $\mu$ is uniform over $T = [0,1]$};
        \node[anchor=west] at (2.5,0.52) {$m=3$ identical schools:};
        \node[anchor=west] at (2.6,0.3) {- capacities $c_1=c_2=c_3=1$};
        \node[anchor=west] at (2.6,0.15) {- values $v_1=v_2=v_3=1$};
        \node[anchor=west] at (2.6,0.0) {- scores $s_1(t) = s_2(t) = s_3(t) = t$};
    \end{tikzpicture}
    \end{center}
    Denote $q_{i,j}(t)$ the probability that a student of type $t$ will be rejected from schools $i$ and $j$, because the other two students have types $>t$ and have already been assigned to those two seats. Denote $u(t)$ the expected utility of a student of type $t$ at equilibrium.
    \smallbreak
    \begin{minipage}{.45\linewidth}
    \begin{tikzpicture}[xscale=2.8,yscale=2,line width=1,>=stealth]
        \draw[->] (0,0) -- (2,0);
        \draw[->] (0,0) -- (0,1.2);
        \draw[domain=0:2,smooth, variable=\x, red!50!black]
            plot ({\x}, {1-(2-\x)*(2-\x)/12});
        \node at (-.1,0.66) {$\frac{2}{3}$};
        \node[anchor=east]  at (2,0.5) {$q_{1,2}(t) = q_{1,3}(t) = q_{2,3}(t) = \frac{(1-t)^2}{3}$};
        \node[anchor=east,red!50!black]  at (2,0.8) {$u(t) = 1 - \frac{(1-t)^2}{3}$};
    \end{tikzpicture}
    \smallbreak Equilibrium \#1: every student draws a list uniformly at random, among the 6 possible lists of length 2.
    \end{minipage}\hfill
    \begin{minipage}{.45\linewidth} 
    \begin{tikzpicture}[xscale=2.8,yscale=2,line width=1,>=stealth]
        \draw[->] (0,0) -- (2,0);
        \draw[->] (0,0) -- (0,1.2);
        \draw[domain=0:2,smooth, variable=\x, blue!50!black]
            plot ({\x}, {1-(2-\x)*(2-\x)/16});
        \node at (-.1,0.75) {$\frac{3}{4}$};
        \node[anchor=east] at (1.8,0.5) {$q_{1,3}(t) = q_{2,3}(t) = \frac{(1-t)^2}{4}$};
        \node[anchor=east]  at (1.8,0.2) {$q_{1,2}(t) = \frac{(1-t)^2}{2}$};
        \node[anchor=east,blue!50!black]  at (1.8,0.8) {$u(t) = 1 - \frac{(1-t)^2}{4}$};
    \end{tikzpicture}
    \smallbreak Equilibrium \#2: every student chooses uniformly at random between the lists $(1,3)$ and $(2,3)$.
    \end{minipage}
    \caption{Example with $n=3$ students, $m=3$ identical schools, and $\ell=2$ applications per student. If students had perfect information or where allowed to apply to all 3 schools, they will all be assigned and will always receive a payoff of 1. When students have imperfect information and can only apply to 2 schools, every symmetric equilibrium will leave a student unassigned with positive probability. The intuition behind the improved payoff in equilibrium \#2 is that students privately agree that school 3 is a ``safety choice''.}
    \label{fig:example-4}
\end{myfigure}
\begin{myfigure}[p!]
    \begin{tikzpicture}[scale=2, line width=1,>=stealth]
        \node[anchor=west] at (-.5,2.6) {$n=50$ students, $\mu$ is uniform over $T =[0,1]$};
        \node[anchor=west] at (-.5,2.4) {$m=5$ schools, with identical preferences $s(t) = t$};
        \begin{scope}[green!70!black]
            \stickhouse{0}{2.0}
        \end{scope}
        \begin{scope}[lime!70!black]
            \stickhouse{0}{1.5}
        \end{scope}
        \begin{scope}[yellow!70!black]
            \stickhouse{0}{1.0}
        \end{scope}
        \begin{scope}[orange!70!black]
            \stickhouse{0}{0.5}
        \end{scope}
        \begin{scope}[red!70!black]
            \stickhouse{0}{0.0}
        \end{scope}
        \node at (0,2.0) {1};
        \node at (0,1.5) {2};
        \node at (0,1.0) {3};
        \node at (0,0.5) {4};
        \node at (0,0.0) {5};
        \node[anchor=west] at (0.2,2.15) {value $v_1 = 5$, capacity $c_1 = 5$.};
        \node[anchor=west] at (0.2,1.65) {value $v_2 = 4$, capacity $c_2 = 10$.};
        \node[anchor=west] at (0.2,1.15) {value $v_3 = 3$, capacity $c_3 = 5$.};
        \node[anchor=west] at (0.2,0.65) {value $v_4 = 2$, capacity $c_4 = 10$.};
        \node[anchor=west] at (0.2,0.15) {value $v_5 = 1$, capacity $c_5 = 10$.};
        \node[anchor=west] at (0.2,1.95) {$\rightarrow$ students of rank 1 to 5};
        \node[anchor=west] at (0.2,1.45) {$\rightarrow$ students of rank 6 to 15};
        \node[anchor=west] at (0.2,0.95) {$\rightarrow$ students of rank 16 to 20};
        \node[anchor=west] at (0.2,0.45) {$\rightarrow$ students of rank 21 to 30};
        \node[anchor=west] at (0.2,-.05) {$\rightarrow$ students of rank 30 to 40};
    \end{tikzpicture}
    \includegraphics{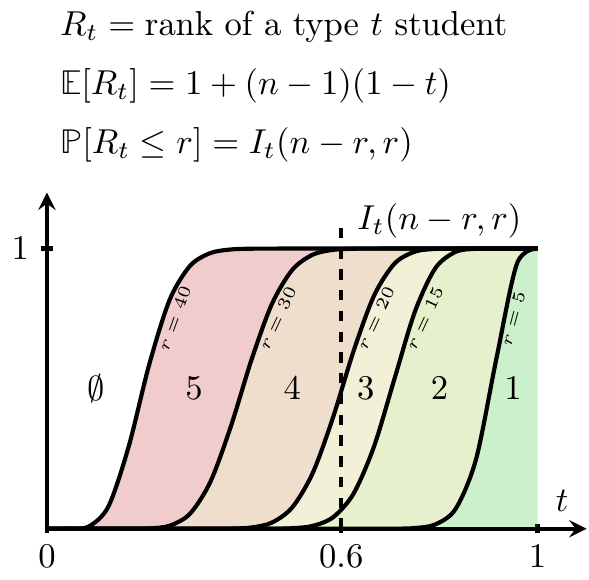}
    \tikzexternaldisable
    \caption{Example where preferences of schools and students are aligned. If students are allowed to apply to all the schools, their outcome is determined by their rank. Let $R_t$ be the rank of a student of type $t$. Then $R_t-1$ follows a binomial distribution of parameter $(n-1,1-t)$, whose cumulative function can be expressed using the regularized incomplete beta function $I$. In particular, a student of type $t=0.6$ will be assigned to school $3$ with probability $I_{0.6}(30,20)-I_{0.6}(35,15) \approx 43\%$. For every $t$, one could observe that the combined probability of the 2 least likely outcomes never exceeds $2\%$. Hence when students are restricted to 3 applications, each student can, at least $98\%$ of the time, ensure the same outcome as the one when everyone has 5 applications.}
    \label{fig:example-5}
\end{myfigure}

\vspace{-1cm}
\paragraph{Risk aversion.} In \cref{fig:example-1.1}, if students were allowed to apply to both schools, the student proposing deferred acceptance procedure would always choose the student optimal stable matching.
However, for some reason, the clearinghouse only allows students to apply to one school. In order to maximize their expected utilities, students can either prioritize the value they give to schools, or the likelihood of being accepted.
\cref{fig:example-1.2} illustrates two families of Bayes-Nash equilibria, as a function of $r$: the more risk averse students are, the less likely is the student optimal matching to be chosen.

\paragraph{Multiple equilibria.} In the complete information case, every stable matching induces an equilibrium. In the incomplete information case, having multiple stable matchings can induce an infinite number of equilibria. \cref{fig:example-1.3}(a) gives an example with a continuum of equilibria, such that the expected payoff of each student type is non-increasing (from left to right). When multiple stable matchings exist, the left-most equilibrium always implement the student optimal stable matching, and the right-most equilibrium always implement the school optimal stable matching. Conversely, \cref{fig:example-1.3}(b) gives an example with an infinite number of equilibria where each student type receives exactly the same expected utility from every equilibrium.

\subsection{Aligned preferences}

\paragraph{Computing an equilibrium by induction.}
When preferences of schools are identical (scoring functions are equal), the best student can ensure she will be matched with her favorite school. When preferences of students are identical (value functions are constant), the student ranked first by the best school can ensure she will be matched with her favorite school.
An equilibrium for such games can be computed by eliminating dominant strategies: when each type's payoff does not depend on the strategies of types having lower scores, we can proceed by induction. \cref{fig:example-2,fig:example-3} give two examples of such games, and illustrate the type of recommendation one could provide to students using this approach.

\paragraph{Identical schools.} \cref{thm:unique-equilibrium} shows that if the matching market is $\alpha$-reducible and students have perfect information, then every equilibrium yields the same outcome. In the incomplete information case, it is a natural question to ask if the same is true assuming strong $\alpha$-reducibility. \cref{fig:example-4} provides a counter-example with three students and three identical schools. If students can only apply to two schools, one student will stay unassigned with positive probability. One way for the students to reduce this probability is to privately agree that one of the school is a ``safety choice'' which is always ranked last. Even if schools were \textit{a priori} identical, such strategies impact the quality of students selected in the safety school, which may cause a differentiation between the schools from one year to the next. Such an example could be interpreted as a recommendation for identical schools to merge their selection process.

\paragraph{Number of applications.} \cref{fig:example-5} gives an example where both the preferences of students and schools are aligned. In such case there is a unique stable matching where the outcome of a student is determined by her rank: the best students are matched to the best school, the next students are matched to the second school, and so on. A student who has incomplete information and only knows her expected rank can try to apply to all the school in a ``window'' around her expected outcome. Using Hoeffding's inequality, the real rank of a student is at most $x\sqrt{n}$ away from her expected rank, with probability at least $1-2e^{-2x^2}$. If schools have capacity $c$, then an upper quota of $\mathcal O(\sqrt{n}/c)$ applications is already large enough for students to ensure they get the same outcome they would obtain when applying to all the schools, with good probability. Such reasoning could be helpful to decision makers when setting an upper quota on the number of applications.

\FloatBarrier
\section{Existence of a Bayes-Nash equilibrium}
\label{sec:constrained:existence}

\subsection{Induced normal form}
Following Milgrom and Weber \cite{milgrom1985distributional}, we define  distributional strategies as the set $\mathcal{\tilde S}_\mu$ of probability distributions $\tilde p \in \Delta(T\times A)$ such that the marginal distribution on $T$ is the distribution $\mu$. In our setting, behavioral and distributional strategies are equivalent as there is a many-to-one mapping from a behavioral strategy $p$ to the corresponding distributional strategy $\tilde p$.
\begin{itemize}
    \item Given $\mu \in \Delta(T)$ and $p \in \mathcal S$, we define the distribution $\tilde p$ such that  $\tilde p(B \times \{a\}) = \int_B p(t,a) \mathrm d\mu(t)$ for every Borel subset $B \subseteq T$ and for every action $a\in A$.
    \item Conversely, given $\tilde p \in \Delta(T \times A)$, first define the marginal distribution $\mu$ such that $\mu(B) = \tilde p(B \times A)$ for every Borel subset $B \subseteq T$. Then, for every action $a\in A$ the measure $B \mapsto \tilde p(B \times \{a\})$ is absolutely continuous with respect to $\mu$, hence Radon-Nikodym theorem gives the existence of a measurable function $t \mapsto p(t,a)$ such that $\tilde p(B\times \{a\}) = \int_B p(t,a) \mathrm d\mu(t)$ for every $B \subseteq T$.
\end{itemize}
We now define the payoff function $\tilde U$ for distributional strategies. For every $(\tilde p_i)_{i\in[n]} \in \Delta(T\times A)^n$ we set
\begin{equation}\label{eq:payoff}
(\tilde U(\tilde p_i, \tilde p_{-i}))_{i\in[n]} =
\frac{\sum_{a \in A^n}
\int_{t\in T^n}
\textsc{Utility}((t_i,a_i)_{i\in[n]})
\cdot \mathbb 1[\text{all }t_i\text{'s are distinct}] \cdot \prod_{i\in[n]} \mathrm d\tilde p_i(t_i,a_i)}{\sum_{a \in A^n}
\int_{t\in T^n}
\mathbb 1[\text{all }t_i\text{'s are distinct}] \cdot \prod_{i\in[n]} \mathrm d\tilde p_i(t_i,a_i)}
\end{equation}
Notice that the transformation from behavioral to distributional strategies is payoff-preserving, that is
\begin{equation}\label{eq:preserving}
\forall (p_i)_{i\in[n]} \in \mathcal S^n,
\quad (U_\mu(p_i, p_{-i}))_{i\in[n]} = (\tilde U(\tilde p_i, \tilde p_{-i}))_{i\in[n]}.
\end{equation} 
The \emph{induced normal form} of the Bayesian game is the surrogate symmetric $n$ players game $\mathcal G_\mu = \langle n, \mathcal{\tilde S}_\mu, \tilde U\rangle$, where the set of actions is the set of distributional strategies $\mathcal{\tilde S}_\mu$.
Thus, a behavioral strategy profile $(p_i)_{i\in[n]}$ is a mixed Bayes-Nash equilibrium in the original Bayesian game if and only if the corresponding distributional strategy profile $(\tilde p_i)_{i\in[n]}$ is a pure Nash equilibrium in $\mathcal G_\mu$.

The notion of $\varepsilon$-equilibrium in the induced normal form game $\mathcal G_\mu$ exactly corresponds to \emph{ex-ante} $\varepsilon$-equilibrium in the Bayesian game, that is a strategy profile where no student can deviate and win more than $\varepsilon$, in average before drawing her type.

\subsection{Existence theorem}

We are now ready to prove the existence of an equilibrium for the game $\mathcal G_\mu$, and thus for the Bayesian game.

\begin{theorem}\label{thm:existence}
  The game $\mathcal G_\mu$ has a symmetric equilibrium.
\end{theorem}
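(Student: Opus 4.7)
The plan is to apply the Kakutani--Fan--Glicksberg fixed-point theorem to the symmetric best-response correspondence on the space $\mathcal{\tilde S}_\mu$ of distributional strategies, exploiting that the induced normal form game has a multilinear payoff structure.

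First I would set up the topology. Since $T = [0,1]^d$ is compact and $A$ is finite, $T \times A$ is compact metrizable, so $\Delta(T \times A)$ is convex and weak-$*$ compact by Prokhorov's theorem. The subset $\mathcal{\tilde S}_\mu$ is cut out by the weak-$*$-closed marginal condition $\sum_{a \in A} \tilde p(\cdot \times \{a\}) = \mu$, hence it is itself convex and weak-$*$ compact. Second, I would exploit the structure of~(\ref{eq:payoff}): summing the denominator over $a \in A^n$ marginalizes each $\tilde p_j$ into $\mu$, so it collapses to the positive constant
$$P_\mu \;=\; \int_{T^n} \mathbb 1[t_1,\dots,t_n \text{ all distinct}] \prod_{j=1}^n \mathrm d\mu(t_j),$$
which depends only on $\mu$ (the degenerate case $P_\mu = 0$ corresponds to a game with no valid type profile). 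Thus $\tilde U_i$, up to the factor $1/P_\mu$, is a multilinear function of the profile $(\tilde p_j)_j$; in particular it is \emph{affine} in the deviator's own strategy.

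Third, I would introduce the symmetric best-response correspondence
$$\Phi(\tilde p) \;=\; \mathop{\arg\max}_{\tilde q \in \mathcal{\tilde S}_\mu} \tilde U_1(\tilde q, \tilde p, \dots, \tilde p).$$
By affine-ness in $\tilde q$, each $\Phi(\tilde p)$ is convex; by upper semicontinuity of the payoff on the compact set $\mathcal{\tilde S}_\mu$, it is non-empty. A fixed point $\tilde p^\star \in \Phi(\tilde p^\star)$ is precisely a symmetric Nash equilibrium of $\mathcal G_\mu$, and by the payoff-preserving identity~(\ref{eq:preserving}) it corresponds to a symmetric Bayes--Nash equilibrium of the original Bayesian game. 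Once closed graph of $\Phi$ is established, Kakutani--Fan--Glicksberg delivers the fixed point.

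The main obstacle is proving joint weak-$*$ continuity of $\tilde U_1$ on $\mathcal{\tilde S}_\mu^n$. The integrand $\textsc{Utility}((t_i,a_i)_i)\cdot\mathbb 1[\text{distinct}]$ is bounded but only measurable in $t$: the output of deferred acceptance jumps whenever two students have identical score at some school (even after averaging over tie-breaks), and the value functions $v_j$ are merely measurable. To transfer weak convergence of measures into convergence of the integrals I would first handle a dense regular subclass of instances, say where the $s_j$ are continuous and $\mu$ is non-atomic, so that the tie-locus has zero mass under the limiting product measure and the Portmanteau theorem applies. I would then extend to the general measurable case via an approximation step: approximate $s_j, v_j$ by continuous counterparts (Lusin) and $\mu$ by non-atomic distributions weakly converging to it, obtain a symmetric equilibrium for each regularized game by the previous step, and pass to a weak-$*$ limit using the compactness of $\mathcal{\tilde S}_\mu$ and the multilinearity-based stability of the payoff under strategy perturbations.
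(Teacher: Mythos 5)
Your overall architecture (weak-$*$ compact convex set of distributional strategies with fixed marginal $\mu$, affineness of the payoff in one's own strategy, a symmetric best-reply correspondence, Kakutani--Fan--Glicksberg) is essentially the route the paper takes by invoking Milgrom and Weber \cite{milgrom1985distributional}, and your observation that the denominator of \cref{eq:payoff} collapses to a constant $P_\mu$ is correct. The gap is in how you establish weak continuity of $\tilde U$ on $(\mathcal{\tilde S}_\mu)^n$. You treat continuity as requiring regularity of the primitives (non-atomic $\mu$, continuous $s_j$, negligible tie-locus) and then try to reach the general case by regularizing the game itself (Lusin approximations of $s_j,v_j$, non-atomic distributions $\mu_k$ weakly approaching $\mu$), taking equilibria $\tilde p_k$ of the perturbed games and passing to a weak-$*$ limit. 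That last step is not justified and in general fails: the $\tilde p_k$ are equilibria for the \emph{perturbed} payoffs and live in $\mathcal{\tilde S}_{\mu_k}$ with varying type marginals, so concluding that the limit is an equilibrium of $\mathcal G_\mu$ requires exactly the weak continuity of the \emph{original} payoff along profiles whose marginals vary --- which is what \cref{thm:continuous-payoff} needs its extra hypotheses for, and which is false in general when $\mu$ has atoms (ties occur with positive probability and the distinctness indicator in \cref{eq:payoff} is genuinely discontinuous on a set of positive mass). The finite-support case, which the paper explicitly includes (complete information, and all of \cref{sec:constrained:efficient}), is precisely the case your regularization scheme cannot reach; ``multilinearity-based stability'' controls perturbations of the strategies, not perturbations of the game.

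The missing idea is that no regularity of $\mu$, $s_j$, $v_j$ is needed once the type marginal is held fixed at $\mu$: for each of the finitely many action profiles, the integrand in \cref{eq:payoff} is a bounded measurable function of $t\in T^n$, hence can be approximated in $L^1(\mu^{\otimes n})$ by a continuous function; since every profile in $(\mathcal{\tilde S}_\mu)^n$ has type marginal exactly $\mu^{\otimes n}$, the resulting error is uniform over all strategy profiles, and weak convergence handles the continuous approximant because $A$ is finite. This gives weak continuity of $\tilde U$ on $(\mathcal{\tilde S}_\mu)^n$ for an arbitrary $\mu$, with the denominator equal to the positive constant $P_\mu$. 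This is exactly the content of Milgrom--Weber's conditions R1 (equicontinuous payoffs, automatic since $A$ is finite) and R2 (the without-replacement joint type distribution is absolutely continuous with respect to $\mu^{\otimes n}$), which is how the paper proves \cref{thm:existence}; with that continuity in hand your fixed-point argument closes as written, the symmetric version being the standard modification of the best-reply correspondence that the paper cites.
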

\begin{proof}
We apply Proposition 1, Proposition 3, and Theorem 1 from \cite{milgrom1985distributional}. Because the set of actions $A$ is finite, the game has equicontinuous payoffs (R1). Types of students are drawn without replacement from~$\mu$, which is equivalent with sampling types independently and condition on the fact that types are distinct. Thus the distribution over $T^n$ (types drawn without replacement) is absolutely continuous with respect to the product distribution (types drawn independently), which proves that the game has absolutely continuous information (R2). Milgrom and Weber use Glicksberg's Theorem to prove that the best response correspondence has a fixpoint, which gives the existence of a Nash equilibrium. Proving the existence of a symmetric equilibrium only requires a small modification of the best response correspondence, see for example \cite{cheng2004notes}.
\end{proof}

\subsection{Computability in the finite case}
When the distribution $\mu$ is discrete and has a finite support $\{t_1, \dots, t_k\}$ of size $k \geq n$, we define the \emph{symmetric agent-form} game $\mathcal G_{\mu}' = \langle k, A, V\rangle$, where each player corresponds to a type, and the payoff function $V: A^k \rightarrow \mathbb R_+^k$ is a vector-valued function, such that the $i$-th coordinate of $V(a)$ is equal to the expected payoff of a student of type $t_i$ in the Bayesian game (the expectation is taken over the types of the $n-1$ other students) when a player of type $t_j$ plays $a_j$. By construction, $p\in \mathcal S$ is a symmetric equilibrium of the Bayesian game if and only if $(p(t_i))_{i\in[k]} \in \Delta(A)^k$ is a mixed equilibrium of the agent-form game.

The notion of $\varepsilon$-equilibrium in the symmetric agent-form game $\mathcal G'_\mu$ exactly corresponds to \emph{interim} $\varepsilon$-equilibrium in the Bayesian game, where no student can deviate and win more that $\varepsilon$ after drawing her type (but before drawing the types of other players). Because an interim approximate equilibrium is also an ex-ante approximate equilibrium, any $\varepsilon$-equilibrium of the game $\mathcal G'_\mu$ induces an $\varepsilon$-equilibrium of the game $\mathcal G_\mu$.

\begin{theorem}
For the game $\mathcal G_\mu'$, computing an exact equilibrium is in the class \textsc{FIXP}, and computing an approximate $\varepsilon$-equilibrium with $\varepsilon > 0$ is in the class \textsc{PPAD}.
\end{theorem}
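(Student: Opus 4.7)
The plan is to view $\mathcal G_\mu'$ as a standard finite $k$-player normal-form game with rational payoffs and invoke the classical complexity classifications of Nash equilibrium computation.

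First, I would verify the hypotheses. The game has $k$ players, each with finite action set $A$, and a payoff function $V\colon A^k\to\mathbb Q_+^k$ whose entries are rational: they inherit rationality from $\mu$, the capacities, the values, and the scores, combined with the finite averaging over the $\binom{k-1}{n-1}$ type multisets drawn without replacement and over the finitely many tie-breaking permutations in deferred acceptance. Thus $\mathcal G_\mu'$ is a bona fide finite normal-form game equipped with an explicit rational payoff table.

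Next, I would express Nash equilibria as fixed points of a polynomial-size arithmetic circuit through the standard Nash map $F\colon \Delta(A)^k\to\Delta(A)^k$ defined componentwise by
$$F_i(x) \;=\; \pi_{\Delta(A)}\bigl(x_i + \nabla_i U(x)\bigr),$$
where $U_i(x)=\sum_{a\in A^k} V_i(a)\prod_{j\in[k]} x_{j,a_j}$ is the multilinear extension of $V$ and $\pi_{\Delta(A)}$ denotes Euclidean projection onto the simplex. Both $U$ and $\nabla U$ are multilinear polynomials whose coefficients are read directly from the payoff table, and the projection onto the simplex is expressible using $\max$ and $\min$; hence $F$ admits an arithmetic circuit over $\{+,-,\times,\div,\max,\min\}$ with rational constants, of size polynomial in the payoff table, whose fixed points are exactly the mixed Nash equilibria of $\mathcal G_\mu'$.

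The conclusion then invokes the standard theorems. FIXP membership of the exact problem follows from Etessami and Yannakakis (for $k\geq 3$; when $k\leq 2$ the Lemke--Howson algorithm places the problem in PPAD, which is contained in FIXP). PPAD membership of the $\varepsilon$-approximate problem follows from Daskalakis--Goldberg--Papadimitriou, extended to $k$ players by Chen--Deng--Teng: an $\varepsilon'$-approximate Brouwer fixed point of $F$ (with $\varepsilon'$ polynomial in $\varepsilon$ and the payoff magnitudes) yields an $\varepsilon$-approximate Nash equilibrium.

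The main subtlety lies in ensuring the circuit for $F$ is polynomial-size with respect to the intended input measure. Taking the agent-form payoff table of $\mathcal G_\mu'$ as the input, polynomiality is immediate from the formulas above. A stronger statement measured in the compact Bayesian parameters $(n,m,\ell,k,\mu)$ would additionally require a polynomial-size circuit expressing each $V_i(a)$ as an expectation over type subsets and tie-breakings, which is routine when the combinatorial averaging can itself be performed efficiently, e.g.\ when $n$ is constant.
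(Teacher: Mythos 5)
Your proposal is correct and follows essentially the same route as the paper: the paper's proof simply observes that $\mathcal G_\mu'$ is a $k$-player normal-form game whose payoff function $V$ is given explicitly by a table of size $k\cdot|A|^k$, and then cites the standard complexity classifications (FIXP for exact, PPAD for approximate equilibria). Your additional details --- rationality of the payoff entries, the explicit Nash-map circuit, and the caveat about measuring input size by the payoff table rather than the compact Bayesian parameters --- are consistent with, and merely flesh out, the paper's one-line argument.
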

\begin{proof}
The game $\mathcal G_\mu'$ is a $k$-players game in normal form, where the payoff function $V$ is given by a matrix of size $k\cdot |A|^k$. As such, the problems of computing exact and approximate equilibria are respectively in the classes \textsc{FIXP} and \textsc{PPAD}, see for example~\cite{yannakakis2009equilibria}.
\end{proof}

\section{Efficient computation of the equilibrium}
\label{sec:constrained:efficient}

In this section we provide efficient algorithms to compute an equilibrium when $\mu$ has a finite support. The student-proposing deferred acceptance procedure being quite complex, we will make some assumption on the preferences of schools and students in order to simplify the matching procedure. More precisely, \cref{algo:alpha} simplifies the function \textsc{Utility} when the matching market induced by $(t_i,a_i) \in (T\times A)^n$ is $\alpha$-reducible.

\begin{algorithm}[h!]
    \begin{algorithmic}
        \State Game parameters: $n$, $m$, $A$, $T=[0,1]^d$, $(s_j)_{j\in[m]}$, $(v_j)_{j\in[m]}$ and $(c_j)_{j\in[m]}$.
        \State \textbf{Initialization:}
        \State \quad\; Add a ``sentinel'' school $0$ with capacity $c_0 = n$ and value $v_0 = 0$, ranked last in every list.
        \Function{Utility}{$(t_i, a_i)_{i\in[n]}\in (T\times A)^n$}
            \State Each school $j$ sorts its applicants ($i\in [n]$ such that $j \in a_i$) by decreasing score ($s_j(t_i)$).
            \State Initialize $r \gets (c_j)_{0 \leq j \leq m}$ the vector of remaining capacities.
            \While{some students are unassigned}
                \For{each school $j$ with a positive capacity ($r_j > 0$) and some unassigned applicant}
                    \State Let $i$ be the top unassigned applicant at school $j$.
                    \If{school $j$ is student $i$'s first choice among schools with a positive capacity}
                        \State Assign student $i$ to school $j$, set $u_i \gets v_j(t_i)$ and $r_j \gets r_j - 1$.
                    \EndIf
                \EndFor
            \EndWhile
            \State\textbf{Return} the vector of utilities $(u_i)_{i\in[n]}$.
        \EndFunction
    \end{algorithmic}
    \caption{Simplified procedure when the matching market is $\alpha$-reducible.}
    \label{algo:alpha}
\end{algorithm}

\begin{theorem}\label{thm:computealpha}
   If the matching market induced by $(t_i,a_i)_{i\in [n]} \in (T\times A)^n$ is $\alpha$-reducible and schools give distinct scores to student, then \cref{algo:alpha} is equivalent with \cref{algo:game-bayesian}.
\end{theorem}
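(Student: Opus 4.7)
The plan is to show that both algorithms, on the same input, return the utility vector induced by the student-optimal stable matching of the restricted market, which under $\alpha$-reducibility is the unique stable matching. Because scores are distinct, the tie-breaking in Algorithm \ref{algo:game-bayesian} is vacuous and its output is deterministic: the student-proposing deferred acceptance outcome. By Clark's characterization \cite{clark2006uniqueness}, an $\alpha$-reducible market has a unique stable matching on every sub-market, so let $M^\star$ denote the unique stable matching of the restricted market; Algorithm \ref{algo:game-bayesian} therefore returns the utility vector induced by $M^\star$. The sentinel school $0$ added in Algorithm \ref{algo:alpha} does not alter stability since it is ranked last by every student and has capacity $n$; it merely ensures the main loop always has work to do for every unassigned student.

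Next I would show that Algorithm \ref{algo:alpha} also reconstructs $M^\star$, by induction on the number of remaining (student, seat) slots. Let $A$ denote the set of currently unassigned students and $B$ the set of schools with positive remaining capacity. Applying $\alpha$-reducibility to $(A, B)$ yields a \emph{fixed pair} $(i, j) \in A \times B$ such that $j$ is $i$'s favorite school in $B$ and $i$ is $j$'s top-scoring applicant among $A$. This is exactly the condition checked by the inner \textbf{for}-loop of Algorithm \ref{algo:alpha}, so the algorithm assigns the pair $(i, j)$ at this step. I then claim that $i$ must be matched to $j$ in $M^\star$: otherwise, in the residual sub-market, $i$ would be either unmatched or matched to a school strictly worse than $j$, while $j$ would have unused capacity or be matched only to students strictly below $i$ in its ranking; either case exhibits $(i, j)$ as a blocking pair, contradicting the uniqueness and stability of $M^\star$ restricted to $(A, B)$ guaranteed by $\alpha$-reducibility.

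After assignment, $A$ loses $i$ and $B$ loses $j$ whenever $j$'s remaining capacity drops to zero. Since $\alpha$-reducibility in Definition \ref{def:alpha} is a universally quantified property over subsets, the residual market remains $\alpha$-reducible, and its unique stable matching is exactly $M^\star$ with the pair $(i,j)$ removed. Induction then closes the argument: Algorithm \ref{algo:alpha} reconstructs $M^\star$ seat by seat and outputs the same utility vector as Algorithm \ref{algo:game-bayesian}.

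The main obstacle is making the many-to-one reasoning precise: one must verify both that $\alpha$-reducibility is preserved when a school loses only one seat (not the school itself), and that the blocking-pair argument for a fixed pair extends cleanly when $j$ still has remaining capacity. Both follow from the quantification over arbitrary subsets in Definition \ref{def:alpha} combined with a careful bookkeeping of the residual market's capacities; these are the details that deserve the most care in the full proof.
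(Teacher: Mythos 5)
Your proposal is correct and follows essentially the same route as the paper's proof: termination because $\alpha$-reducibility guarantees a fixed pair (hence an assignment) at every iteration of the while loop, and correctness because each pair assigned by Algorithm~\ref{algo:alpha} is mutually first-choice in the residual market, hence belongs to every stable matching — in particular the student-proposing deferred acceptance outcome — and the argument closes by induction on the residual market. Your additional detail (invoking uniqueness via Clark and checking that $\alpha$-reducibility survives seat-by-seat removal) only elaborates steps the paper leaves implicit.
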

\begin{proof}
Assuming that schools give distinct scores to students, the  algorithm is able to sort applicants by decreasing scores. Assuming that the matching market is $\alpha$-reducible, at least one student will be assigned at each iteration of the while loop and the algorithm will terminate.
To show the equivalence with \cref{algo:game-bayesian}, consider the first pair $(i,j)$ assigned by \cref{algo:alpha}. By construction, $i$ and $j$ prefer each other to everyone else, thus must be matched in every stable matching, and in particular the one computed by the student proposing deferred acceptance algorithm. We continue with the same reasoning by induction.
\end{proof}

Given a sequence of types $(t_i)_{i \geq 1}$, such that each school gives distinct scores to each type, we denote $\mu_k$ the uniform distribution over $(t_i)_{i\in[k]}$. \cref{sec:constrained:computealpha,sec:constrained:computeidentical} study different hypotheses under which we are able to compute an equilibrium $\tilde p_k$ for the game $\mathcal G_{\mu_k}$.

\subsection{One application per student and strong $\alpha$-reducibility}
\label{sec:constrained:computealpha}

In the complete information setting, and when the matching market is $\alpha$-reducible, \cref{thm:unique-equilibrium} shows that every equilibrium implements the unique stable matching, and that one can build simple equilibrium by induction: there exist a fixed student-school pair $(i,j)$ who prefer each other to everyone else, thus reporting $j$ is a dominant strategy for player $i$, who can be removed from the market (together with her seat), and so on. In the incomplete information case, we build on this intuition, replacing a student-school pair by a type-school pair $(t,j)\in T\times [m]$.

\begin{definition}[strong $\alpha$-reducibility]
\label{def:strongalpha}
We say that the matching game is \emph{strongly $\alpha$-reducible}, where for every $S \subseteq T$ and $q\in[0,1]^m$, there must be at least one pair $(t,j)\in S\times[m]$ such that $s_j(t') \leq s_j(t)$ for every $t'\in S$ and $q_{j'} \cdot v_{j'}(t) \leq q_j\cdot v_j(t)$ for every $j'\in [m]$.
\end{definition}

As a special case, notice that if schools have identical preferences (all functions $s_j$ are equal), or if students have identical preferences (all functions $v_j$ are constant), or if preferences of students and schools are symmetric ($s_j = v_j$ for all $j$), then the game is strongly $\alpha$-reducible.

\begin{algorithm}[h!]
    \begin{algorithmic}
        \State Game parameters: $k$, $n$, $m$, $A = [m]$, $T=[0,1]^d$, $(s_j)_{j\in[m]}$, $(v_j)_{j\in[m]}$ and $(c_j)_{j\in[m]}$.
        \Function{Proba}{$x$, $c$}
            \State\textbf{Return}
            $\sum_{i=0}^{c-1} \binom{x}{i}\binom{k-1-x}{n-1-i}/\binom{k-1}{n-1}$,
            that is, the probability that at most $c-1$
            \\\qquad of the other $n-1$ students will draw one of the $x$ types that 
            \\\qquad have (already) been assigned to this school of capacity $c$.
        \EndFunction
        \Function{Equilibrium}{collection of types $(t_i)_{i\in[k]} \in T^k$}
            \State Initialize $x \gets (0)_{j\in [m]}$ the number of types applying to each school.
            \While{some types are unassigned}
                \For{each school $j \in [m]$}
                \State Let $i$ be an unassigned type which maximizes $s_j(t_i)$.
                    \If{$v_{j'}(t_i) \cdot \textsc{Proba}(x_{j'}, c_{j'}) \leq v_{j}(t_i) \cdot \textsc{Proba}(x_{j}, c_{j})$ for every other $j'\in [m]$}
                    \State Assign type $i$ to school $j$, set $a_i \gets j$ and $x_j \gets x_j + 1$.
                    \EndIf
                \EndFor
            \EndWhile
            \State\textbf{Return} the distributional strategy $\mathrm{Uniform}\{(t_i, a_i)\}_{i \in [k]}$.
        \EndFunction
    \end{algorithmic}
    \caption{Compute an equilibrium with $\ell = 1$ and strong $\alpha$-reducibility.}
    \label{algo:strongalpha}
\end{algorithm}

\begin{theorem}
    \label{thm:compute-strongalpha}
    Let $\mu_k$ be the uniform distribution over $(t_i)_{i \in [k]}$.
    If every school gives distinct scores to types, and if the game is strongly $\alpha$-reducible, then \cref{algo:strongalpha} returns a symmetric equilibrium of the game $\mathcal G_{\mu_k}$.
\end{theorem}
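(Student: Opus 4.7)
The plan is to show first that Algorithm~3 is well-defined, and then that its output is a symmetric equilibrium. For well-definedness and termination, I would apply strong $\alpha$-reducibility with $S$ equal to the set of currently unassigned types and $q=(\textsc{Proba}(x_j,c_j))_{j\in[m]}$: the definition produces a pair $(t,j)\in S\times[m]$ satisfying both conditions in the inner \textbf{if}, so at least one type is assigned per iteration of the \textbf{while} loop. Hence the algorithm terminates after $k$ iterations with a well-defined map $a:[k]\to[m]$.

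The second step is to interpret the output as the behavioural strategy $p$ in which type $t$ plays the pure action $a(t)$, and to show it is a best response to itself. Assume every student plays $p$; since types are drawn without replacement, conditionally on a given type $t$ being one of the $n$ drawn students the remaining $n-1$ are uniform over the other $k-1$ types. If $t$ applies to $j'$, its competitors at $j'$ are $H_{j'}(t):=\{t'\mid a(t')=j',\ s_{j'}(t')>s_{j'}(t)\}$, and $t$ is admitted iff strictly fewer than $c_{j'}$ of them are drawn — an event of probability exactly $\textsc{Proba}(|H_{j'}(t)|,c_{j'})$ by the closed-form in the algorithm.

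The key structural lemma is that, writing $x_{j'}^{(\tau)}$ for the counters when $t$ is processed at step $\tau$, we have $|H_j(t)|=x_j^{(\tau)}$ for $j=a(t)$ and $|H_{j'}(t)|\geq x_{j'}^{(\tau)}$ for every other $j'$. Indeed, any type $t''$ assigned to some $j''$ strictly before $t$ was, at its own processing step, top unassigned at $j''$ while $t$ was still unassigned, hence $s_{j''}(t'')\geq s_{j''}(t)$; this single observation already gives $|H_{j'}(t)|\geq x_{j'}^{(\tau)}$ for every $j'$. Conversely, any type $t''$ assigned to $j=a(t)$ strictly after $t$ was unassigned at step $\tau$ when $t$ was top unassigned at $j$, so by distinctness of scores $s_j(t'')<s_j(t)$ and $t''$ does not contribute to $H_j(t)$, yielding the equality $|H_j(t)|=x_j^{(\tau)}$. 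Since $x\mapsto\textsc{Proba}(x,c)$ is non-increasing, combining this with the optimality test enforced by the algorithm at step $\tau$ gives, for every $j'\in[m]$,
\[
v_{j'}(t)\,\textsc{Proba}(|H_{j'}(t)|,c_{j'})
\leq v_{j'}(t)\,\textsc{Proba}(x_{j'}^{(\tau)},c_{j'})
\leq v_j(t)\,\textsc{Proba}(x_j^{(\tau)},c_j)
= v_j(t)\,\textsc{Proba}(|H_j(t)|,c_j).
\]
Hence no pure deviation from $a(t)$ improves $t$'s interim utility, and therefore no mixed deviation does either. Integrating over $t\sim\mu_k$ then shows that $p$ is an ex-ante best response to itself, i.e.\ a symmetric Bayes--Nash equilibrium of $\mathcal G_{\mu_k}$.

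The main obstacle is the structural lemma: the proof must carefully exploit the processing order forced by strong $\alpha$-reducibility so that a type's competitors at its own assigned school are \emph{exactly} those counted by the algorithm, while its competitors at any alternative school can only be \emph{more numerous} — both asymmetries being essential to conclude via the monotonicity of $\textsc{Proba}(\cdot,c)$.
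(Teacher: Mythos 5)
Your proof is correct and rests on the same two facts as the paper's own argument: at the moment a type $t$ is assigned to school $j$, its acceptance probability at $j$ is exactly $\textsc{Proba}(x_j,c_j)$ (only previously assigned types can beat it there), while at any other school $j'$ it is at most $\textsc{Proba}(x_{j'},c_{j'})$, with the hypergeometric form justified by types being drawn without replacement. The only difference is packaging: the paper phrases this as elimination of dominant strategies along the algorithm's assignment order, whereas you verify interim best responses directly at the output profile via your structural lemma on the competitor sets $H_{j'}(t)$ together with the monotonicity of $\textsc{Proba}(\cdot,c)$ --- essentially the same argument.
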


\begin{proof}
    First, we show that if the game is strongly $\alpha$-reducible then \cref{algo:strongalpha} terminates: at each iteration of the while loop there is at least one fixed pair with $q = (\textsc{Proba}(x_j, c_j))_{j \in [m]}$.
    Then we compute a symmetric equilibrium via the elimination of dominant strategies.
    
    At each iteration, we define $q_j = \textsc{Proba}(x_j, c_j)$. Consider type-school pairs $(t,j)$ where school $j$ ranks~$t$ first among types that have not been assigned yet (there are no ties by assumption). Then, a student of type $t$ will be accepted to school $j$ with probability $=q_j$ (because she is ranked first in school $j$), and will be accepted to another school $j'$ with probability $\leq q_{j'}$ (because she might not be ranked first at school $j'$). If $(t,j)$ is a fixed pair, then $q_{j'} \cdot v_{j'} \leq q_j \cdot v_j$ for every $j'$, and it is a dominant strategy for a student of type $t$ to apply to school $j$.
    
    A crucial detail is that student's types are drawn without replacement. This removes any feedback the strategy of a type may have on itself because of multiple students having the same type. This also explains why \textsc{Proba} uses an hypergeometric distribution rather than a simpler binomial distribution.
\end{proof}

\subsection{Schools have identical preferences}
\label{sec:constrained:computeidentical}

Gusflied and Irving \cite{gusfield1989stable} observed that the matching is unique when all schools have identical preferences. In such a case, the matching procedure of \cref{algo:alpha} further simplifies into the serial dictatorship mechanism: the best student chooses her favorite school, then the second best student chooses among remaining schools, and so on.

\begin{algorithm}
    \begin{algorithmic}
        \State \textbf{Game parameters:} $n$, $m$, $A$, $T=[0,1]^d$, $s$, $(c_j)_{j \in [m]}$ and $(v_j)_{j\in [m]}$.
        \State \textbf{Initialization:}
        \State \quad\; Add a ``sentinel'' school $0$ with capacity $c_0 = n$ and value $v_0 = 0$, ranked last in every list.
        \State \quad\; Define the set of ``remaining capacity'' vectors $R = \prod_{0 \leq j \leq m} \{0, 1, \dots, c_j\}$.
        \Function{School}{preference list $a\in A$, remaing capacities $r\in R$}
        \State \textbf{Return} the first school $j$ in the preference list $a$ whose remaining capacity is $r_j > 0$.
        \EndFunction
        \Function{Equilibrium}{collection of types $(t_i)_{i\in[k]} \in T^k$}
        \State Sort types $(t_i)_{i\in[k]}$ by non-increasing order of score $s(t_i)$.
        \State Initialize the distribution $q \in \Delta(R)$ such that $q((c_j)_{0 \leq j \leq m}) = 1$.
        \For{$i$ from $1$ to $k$}
            \State Let $a_i = \argmax_{a\in A} \sum_{r\in R}q(r) \cdot v_{\textsc{school}(a,r)}(t_i)$.
            \For{each ``remaining capacity'' vector $r \in R$ in lexicographical order}
                \State Let $p \gets (n-1-\sum_{0 \leq j \leq m} (c_j - r_j))/(n-i+1)$ be the probability
                \\\hspace{2cm} that a student has type $t_i$.
                \State Let $r' \gets r - \delta_j$ be the capacity vector once a student
                \\\hspace{2cm} is assigned to school $j = \textsc{School}(a_i, r)$.
                \State Set $q(r) \gets (1-p) \cdot q(r)$ and $q(r') \gets q(r') + p \cdot q(r)$.
            \EndFor
        \EndFor
        \State \textbf{Return} the distributional strategy $\mathrm{Uniform}\{(t_i, a_i)\}_{i \in [k]}$.
        \EndFunction
    \end{algorithmic}
    \caption{Computing an equilibrium when schools have identical preferences $s_j = s$.}
    \label{algo:identical}
\end{algorithm}

\begin{theorem}
    \label{thm:compute-identical}
    Let $\mu_k$ be the uniform distribution over $(t_i)_{i \in [k]}$. If all school have the same scoring function~$s$, and if all $s(t_i)$'s are distincts, then \cref{algo:identical} returns a symmetric equilibrium of the game $\mathcal G_{\mu_k}$.
\end{theorem}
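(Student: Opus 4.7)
The plan is in three stages: reduce to serial dictatorship using the identical-preferences hypothesis, compute an equilibrium by elimination of dominant strategies in decreasing order of score (as in \cref{thm:unique-equilibrium}), and then verify that the distribution $q$ maintained by the algorithm encodes exactly the probabilistic information needed at each best-response step.

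Because all schools share the same scoring function $s$ and the scores $s(t_i)$ are distinct, every realised market is $\alpha$-reducible, so \cref{thm:computealpha} implies that deferred acceptance coincides with serial dictatorship: the $n$ drawn students are processed in decreasing order of $s$, each being assigned the first school on her submitted list that still has capacity (else the sentinel school $0$). A student of type $t_i$ therefore only cares about the remaining-capacity vector $r$ she sees upon arrival, and this $r$ is determined by which of the higher-scoring types $\{t_1, \ldots, t_{i-1}\}$ appear among the other $n - 1$ drawn students and by the strategies they play; types $t_j$ with $j > i$ never affect her. Processing $i = 1, \ldots, k$ in order of decreasing score and picking $a_i$ to maximise type $t_i$'s expected utility against the conditional law of $r$ given $S = t_i$ (with $a_1, \ldots, a_{i-1}$ already fixed) therefore produces a pure symmetric profile in which, for each $i$, $a_i$ is a best response regardless of the later $a_{i+1}, \ldots, a_k$ --- a Bayes--Nash equilibrium.

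The technical core is to show that, at the start of iteration $i$, the algorithm's $q(r)$ equals $\Pr[r \mid S = t_i]$. The initialisation $q((c_j)_{0 \le j \le m}) = 1$ is immediate, since $t_1$ has no higher-scoring type and faces the full capacity vector. For the inductive step, conditional on $S = t_{i+1}$ the other $n - 1$ students form a uniformly random size-$(n-1)$ subset of the remaining $k - 1$ types; a hypergeometric computation shows that, given the trajectory reaching capacity vector $r$, the conditional probability that $t_i$ also appears among the others depends only on $i$ and on the number $\sum_j (c_j - r_j)$ of already-assigned students, and is exactly the factor $p$ used by the algorithm. Hence the mass migrating from $r$ to $r' = r - \delta_{\textsc{School}(a_i, r)}$ is $p \cdot q(r)$ and the complementary mass stays at $r$; traversing $r$'s in lexicographic order visits each $r'$ strictly before its parent $r$, so the in-place updates compose to the correct joint transformation. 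The main obstacle is precisely this hypergeometric identity --- checking that the simultaneous shift of conditioning from $S = t_i$ to $S = t_{i+1}$ together with the insertion of $t_i$ into the pool of others collapses to one splitting factor that depends only on $i$ and on $\sum_j (c_j - r_j)$, independently of the particular trajectory giving rise to $r$ --- after which correctness of each $a_i$ and the equilibrium property follow from the two inductions (on types and on iterations) running in lockstep.
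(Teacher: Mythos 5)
Your proposal is correct and follows essentially the same route as the paper's proof: reduce to serial dictatorship (identical scores, no ties), eliminate dominant strategies in decreasing order of score using that a type $t_i$'s payoff is unaffected by lower-scoring types, and argue inductively that the maintained distribution $q$ is exactly the conditional law of the remaining-capacity vector faced by a type-$t_i$ student, with the without-replacement (hypergeometric) splitting factor. You additionally make explicit two points the paper leaves implicit --- that shifting the conditioning from ``my type is $t_i$'' to ``my type is $t_{i+1}$'' does not change the law of the prefix presence pattern, and that the lexicographic in-place update visits each child $r'$ before its parent $r$ so pushed mass is never re-split --- which is a welcome refinement rather than a deviation.
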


\begin{proof}
    We compute a symmetric equilibrium by eliminating dominant strategies. After sorting types by decreasing scores, notice that the expected payoff of a student having type $t_i$ does not depend on the strategy of students having types $t_{i+1}, \dots, t_k$.
    
    At each iteration $i$, $q$ corresponds to the distribution over remaining seats when the serial dictatorship mechanism consider a type $t_i$ student (and assigns her to the first available school in her preference list). Notice that because we allow students to apply to more than 1 school, correlations may exist between the number of remaining seats in different schools, which is the reason why we store the whole distribution and not only its marginals. Then, we compute the expected payoff of each action, chose the best one, and update distribution $q$ accordingly.
    
    As in \cref{thm:compute-strongalpha}, types of students are drawn without replacement. This is exactly the distribution we consider when updating $q$: conditioning on the fact that remaining seats are given by $r$, exactly $x = \sum_{0 \leq j \leq m} (c_j - r_j)$ students have drawn types in $\{t_1, \dots, t_{i-1}\}$, hence $n-1-x$ other students have types in $\{t_i, \dots, t_n\}$, and one of them will draw type $t_i$ with probability $(n-1-x)/(n-i+1)$. 
\end{proof}

\section{Convergence theorem}\label{sec:constrained:approx}

This section is a toolbox to prove that an algorithm approximates a Bayes Nash equilibrium.
\cref{fig:simulations-1,fig:simulations-2} illustrate how one can combine algorithms from \cref{sec:constrained:efficient} with the convergence theorem to compute equilibria of games with a continuous distribution over types.

\paragraph{Weak convergence of type distribution.} Computing an equilibrium is more tractable when the set of strategies has finite dimension. For that matter, when $\mu$ is continuous, we will discretize the set of types.
Denote $\Delta^d(T) \subseteq \Delta(T)$ the set of discrete distributions having a finite support. \cref{thm:approx-distrib} states that $\Delta^d(T)$ is dense in $\Delta(T)$ for the weak convergence of measures. More precisely, one can approximate a distribution by drawing a finite number of independent samples from it.

\begin{theorem}
    \label{thm:approx-distrib}
    Let $\mu\in\Delta(T)$ be a distribution and let $(t_i)_{i\geq 1}$ be a sequence of independent random variables with distribution $\mu$. For all $k \geq 1$, define the (random) distribution $\mu_k$ such that $\mu_k(B) = |\{t_i\}_{i\in[k]}\cap B|/k$ for all Borel set $B \subseteq T$.
     Then almost surely (over the randomness of the $t_i$'s), the sequence $(\mu_k)_{k\geq 1}$ weakly converges towards $\mu$.
\end{theorem}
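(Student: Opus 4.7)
The plan is to establish weak convergence via its definition: $\mu_k \Rightarrow \mu$ means $\int f\, d\mu_k \to \int f\, d\mu$ for every bounded continuous function $f : T \to \mathbb{R}$. Since $T = [0,1]^d$ is compact, every continuous function on $T$ is automatically bounded, so it suffices to show that almost surely,
\[
\frac{1}{k}\sum_{i=1}^{k} f(t_i) \;\longrightarrow\; \int_T f \, d\mu \quad \text{for all } f \in C(T).
\]

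First I would fix a single $f \in C(T)$. The random variables $f(t_1), f(t_2), \dots$ are i.i.d., bounded (hence integrable), with common mean $\int f\, d\mu$. The strong law of large numbers then yields the above convergence almost surely. Call this almost sure event $E_f$.

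The main obstacle is upgrading the convergence from ``almost surely for each fixed $f$'' to ``almost surely for all $f$ simultaneously,'' since $C(T)$ is uncountable and a priori the intersection of uncountably many full-measure events need not have full measure. The key observation is that $T$ is a compact metric space, so $C(T)$ equipped with the sup norm is separable (for instance, by Stone--Weierstrass, polynomials with rational coefficients form a countable dense subset). Let $\{f_m\}_{m \geq 1}$ be such a dense countable family. Setting $E = \bigcap_{m \geq 1} E_{f_m}$, a countable intersection of full-measure events, we have $\mathbb{P}(E) = 1$.

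Finally, I would show that on the event $E$, the empirical measures $\mu_k$ converge weakly to $\mu$. Fix $\omega \in E$ and an arbitrary $f \in C(T)$. Given $\varepsilon > 0$, choose $f_m$ with $\|f - f_m\|_\infty < \varepsilon$. Then
\[
\left|\int f\, d\mu_k - \int f\, d\mu\right|
\le 2\varepsilon + \left|\int f_m\, d\mu_k - \int f_m\, d\mu\right|,
\]
and the last term goes to $0$ by definition of $E$. Letting $k \to \infty$ and then $\varepsilon \to 0$ concludes the proof. This is the classical Varadarajan theorem specialized to the compact metric space $T = [0,1]^d$, and the only delicate point is the separability argument that reduces the uncountable family of test functions to a countable one.
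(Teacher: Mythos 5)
Your proposal is correct. The paper does not give an argument at all: its proof of this statement is a one-line citation to Varadarajan's 1958 theorem on the almost sure weak convergence of empirical measures. What you wrote is precisely the standard proof of that theorem in the compact case, so in substance you are reproving the cited result rather than taking a different route. Your use of compactness of $T=[0,1]^d$ is exactly what makes the argument short: continuous test functions are automatically bounded, and $(C(T),\|\cdot\|_\infty)$ is separable by Stone--Weierstrass, so the strong law applied to a countable dense family $\{f_m\}$ plus the $2\varepsilon$-approximation step upgrades the null sets correctly; this is the only delicate point and you handle it properly. (For a general separable metric type space one would instead work with a countable convergence-determining class, e.g.\ bounded Lipschitz functions, which is how Varadarajan's theorem is proved in full generality -- worth knowing if the compactness assumption on $T$ were ever dropped.) One small reading note: the paper's literal definition $\mu_k(B)=|\{t_i\}_{i\in[k]}\cap B|/k$ collapses repeated values, whereas your identity $\int f\,\mathrm d\mu_k=\frac1k\sum_{i\le k}f(t_i)$ treats $\mu_k$ as the usual empirical measure $\frac1k\sum_i\delta_{t_i}$; this is clearly the intended object (and the two coincide almost surely when $\mu$ is atomless), so it does not affect the validity of your argument.
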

\begin{proof}
See \cite{varadarajan1958convergence}.
\end{proof}

\paragraph{Weak convergence of distributional strategies.} Let us explain why the formalism of distributional strategies is required. For approximation purposes, we are interested in the case where each $\mu_k$ is discrete and has finite support, converging weakly towards a continuous distribution $\mu$. For any behavioral strategy $p \in \mathcal S$, one could set each $p_k$ to be equal to $p$ almost everywhere (outside of the support of $\mu_k$), while ensuring that each $p_k$ is a symmetric Nash equilibrium under $\mu_k$. Hence, assuming the weak convergence of behavioral strategies is not enough to prove that the limit is a Nash equilibrium.

\paragraph{Continuity of the payoff function.} Unfortunately, assuming the weak convergence of a sequence of equilibria in distributional strategies, is still not enough to show that the limit is an equilibrium. In particular, we cannot directly apply Theorem 2 from Milgrom and Weber \cite{milgrom1985distributional}, because the payoff function $\textsc{Utility}: (T\times A)^n\rightarrow \mathbb R_+^n$ might be discontinuous in the players types. However, we will be able to show the weaker property that $\tilde U: \Delta(T\times A)^n \rightarrow \mathbb R_+$ is (sequentially) continuous at the limit (when distributional strategies are endowed with the topology of the weak convergence of measures). This requires additional continuity assumptions on $\mu$, $s_j$'s and $v_j$'s.

\begin{theorem}
    \label{thm:continuous-payoff}If the following conditions hold, then the utility function $\tilde U$ is weakly continuous at every strategy profile in $(\mathcal{\tilde S}_\mu)^n$.
    \begin{itemize}
        \item the distribution $\mu$ is atomless (that is, $\mu(\{t\}) = 0$ for every $t\in T$),
        \item  value and scoring functions are continuous $\mu$-almost everywhere (that is, $\mu(D) = 0$ where $D$ is the set of discontinuities of a scoring function $s_j$ or of a value function~$v_j$),
        \item level sets of scoring functions are $\mu$-negligible (that is, $\mu(s_j^{-1}(\{y\})) = 0$ for every $j\in[m]$ and $y\in[0,1]$).
    \end{itemize}
\end{theorem}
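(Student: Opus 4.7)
The plan is to reduce the statement to a Portmanteau-type argument. Because every profile in $(\mathcal{\tilde S}_\mu)^n$ has each $\tilde p_i$ with $T$-marginal $\mu$, the $T^n$-marginal of $\prod_i \tilde p_i$ equals $\mu^{\otimes n}$. Since $\mu$ is atomless, the diagonal-like set $\{t\in T^n : \exists i\ne i',\ t_i=t_{i'}\}$ is $\mu^{\otimes n}$-null, so the denominator in \eqref{eq:payoff} is identically $1$ along any sequence in $(\mathcal{\tilde S}_\mu)^n$. It therefore suffices to prove weak continuity of the numerator. Since $A$ is finite, the numerator is a finite sum over $a\in A^n$ of terms $I_a(\tilde p_1,\ldots,\tilde p_n) := \int_{T^n} f_a(t)\,\prod_i d\nu_i^{a_i}(t_i)$, where $f_a(t) := \textsc{Utility}((t_i,a_i))_{i^*} \cdot \mathbb 1[\text{all } t_i \text{ distinct}]$ is the integrand for whichever coordinate $i^*$ we are computing, and $\nu_i^{a_i}(B) := \tilde p_i(B\times\{a_i\})$ is the finite measure on $T$ obtained by slicing $\tilde p_i$ at the action $a_i$.

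Because $A$ is discrete, weak convergence $\tilde p_i^{(k)}\to\tilde p_i$ on $T\times A$ implies $\nu_i^{a_i,(k)}\to\nu_i^{a_i}$ weakly on $T$ for every $a_i$, hence $\bigotimes_i\nu_i^{a_i,(k)}\to\bigotimes_i\nu_i^{a_i}$ weakly on $T^n$. By the Portmanteau theorem, $I_a(\tilde p_i^{(k)})\to I_a(\tilde p_i)$ reduces to showing that $f_a$ is bounded and that its discontinuity set $D_{f_a}$ is negligible for the limit product measure. Boundedness follows from the $v_j$ being bounded. For the negligibility, observe that $\nu_i^{a_i}\ll\mu$ (its density with respect to $\mu$ is $p_i(\cdot,a_i)$), so $\bigotimes_i\nu_i^{a_i}\ll\mu^{\otimes n}$, and it is enough to prove $\mu^{\otimes n}(D_{f_a})=0$.

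The heart of the argument is to cover $D_{f_a}$ by three $\mu^{\otimes n}$-null sets. First, the coincidence set $\{t:\exists i\ne i',\ t_i=t_{i'}\}$ is null because $\mu$ is atomless. Second, by Fubini, the set of $t$ whose components ever fall in a discontinuity set of some $s_j$ or $v_j$ is null, using the hypothesis that these discontinuity sets are $\mu$-null. Third, by Fubini combined with the hypothesis $\mu(s_j^{-1}(\{y\}))=0$ for all $j,y$, the tie set $\{t:\exists j,\exists i\ne i',\ s_j(t_i)=s_j(t_{i'})\}$ is also null. Outside these three sets, types are pairwise distinct, scores $s_j(t_i)$ induce strict total orders at each school, and scores and values are continuous at each coordinate; a small perturbation of $t$ preserves every strict score inequality, so the deferred acceptance algorithm produces the same matching, no tie-breaking randomization is needed, and $f_a$ equals $v_j(t_{i^*}) \cdot \mathbb 1[\text{distinct}]$ for the locally constant matched school $j$. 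Continuity of $v_j$ at $t_{i^*}$ then gives continuity of $f_a$ at $t$.

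The hardest step will be the third item together with the locality claim: carefully unwinding the deferred acceptance procedure to confirm that distinctness of types plus absence of score ties really suffice to make both the matching and the tie-breaking average locally constant in $t$, so that continuity transfers cleanly from $v_j$ to $f_a$. Once this is in hand, summing $I_a$ over the finite set $A^n$ yields weak continuity of the numerator, hence of $\tilde U$ at every strategy profile in $(\mathcal{\tilde S}_\mu)^n$.
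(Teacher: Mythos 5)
Your overall route is the paper's own: reduce to the Portmanteau (mapping) theorem by showing that the integrands in \eqref{eq:payoff} are bounded and continuous off a null set of the limiting product measure, with the three hypotheses ruling out, respectively, coinciding types, discontinuity points of the $s_j$'s and $v_j$'s, and score ties; your extra details on slicing each $\tilde p_i$ at a fixed action, passing to products, and using $\nu_i^{a_i}\ll\mu$ to transfer negligibility from $\mu^{\otimes n}$ are correct elaborations of what the paper leaves implicit.

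The one step that does not survive scrutiny is your treatment of the denominator. You declare it ``identically $1$ along any sequence in $(\mathcal{\tilde S}_\mu)^n$'' and conclude that only the numerator needs attention; this proves continuity of the restriction of $\tilde U$ to $(\mathcal{\tilde S}_\mu)^n$, not continuity of $\tilde U\colon\Delta(T\times A)^n\to\mathbb R_+^n$ at points of $(\mathcal{\tilde S}_\mu)^n$. The distinction matters for how the theorem is used: in \cref{thm:convergence} it is applied to sequences $\tilde p_k\in\mathcal{\tilde S}_{\mu_k}$ whose marginals $\mu_k$ are discrete, so along those sequences the denominator is strictly below $1$ (types collide with positive probability) and cannot be dismissed. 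The repair needs nothing beyond your own toolkit: the denominator's integrand $\mathbb 1[\text{all }t_i\text{'s are distinct}]$ is bounded and continuous off the coincidence set, which is null for the limiting product measure because $\mu$ is atomless, so by the same Portmanteau argument the denominator converges to $1>0$ and the ratio converges. With that sentence added, your proof coincides with the paper's.
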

\begin{proof}
Let $(\tilde p_i)_{i\in[n]} \in (\mathcal{\tilde S}_\mu)^n$ be a strategy profile.
In \cref{eq:payoff} which defines $\tilde U$, we are going to show that integrands are continuous almost everywhere with respect to $(\tilde p_i)_{i\in[n]}$. Then, we conclude the proof using the Portmanteau Theorem (see for example Theorem 3.10.1 from \cite{durrett2019probability}), showing that $\tilde U$ is sequentially\footnote{Using Prokhorov's theorem, the space of probability measures $\Delta(T\times A)$ endowed with its weak topology is metrizable, thus the notions of sequential continuity and continuity are equivalent.} continuous in $(\tilde p_i)_{i\in[n]}$. 

First, assuming that $\mu$ is atomless is sufficient to prove that $\mathbb 1[\text{all }t_i\text{'s are distinct}]$ is continuous almost everywhere.
Moreover, if $\textsc{Utility}$ is not continuous in $(t_i,a_i)_{i\in[n]}$, then it is because a value function or a scoring function is discontinuous in some $t_i$, or because a school gives the same score to two types. Each condition occurs with probability 0, thus $\textsc{Utility}$ is continuous almost everywhere.
\end{proof}

\paragraph{Putting everything together.} To approximate a Nash equilibrium of the game $\mathcal G_\mu$, first use \cref{thm:continuous-payoff} to show that $\tilde U$ is weakly continuous at every strategy profile in $(\mathcal{\tilde S}_\mu)^n$.
Then use \cref{thm:approx-distrib} to build a discrete approximation $\mu_k$ of the type distribution $\mu$. Then, compute a symmetric Nash equilibrium $\tilde p_k$ of the game $\mathcal G_{\mu_k}$. Using Prokhorov's theorem, the set of distributional strategies $\Delta(T\times A)$ is metrizable and (sequentially) compact, hence one can build a converging subsequence of distributional strategies, whose limit will be a symmetric Nash equilibrium of $\mathcal G_\mu$.

\begin{theorem}\label{thm:convergence}
    Consider a sequence of measures $\mu_{k\geq n} \in \Delta(T)$ and a sequence of behavioral strategies $p_{k\geq n} \in \mathcal S$, if 
    \begin{itemize}
        \item for all $k\geq n$, the distributional strategy $\tilde p_k \in \mathcal{\tilde S}_{\mu_k}$ is a symmetric equilibrium of $\mathcal G_{\mu_k}$,
        \item the sequence of distributional strategies weakly converges towards a strategy $\tilde p \in \mathcal{\tilde S}_\mu$ with a marginal type distribution $\mu\in \Delta(T)$,
        \item the payoff function $\tilde U$ is weakly continuous at every strategy profile in $(\mathcal{\tilde S}_\mu)^n$,
    \end{itemize}
    then $\tilde p$ is a symmetric equilibrium for the game $\mathcal G_\mu = \langle n, \mathcal{\tilde S}_\mu, \tilde U\rangle$. Alternatively, if $\tilde p_k$'s are $\varepsilon$-equilibria with $\varepsilon > 0$ (ex-ante approximate equilibria of the Bayesian games), then $\tilde p$ is  an $\varepsilon$-equilibrium of the game $\mathcal G_\mu$.
\end{theorem}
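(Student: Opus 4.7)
The plan is to verify the Nash equilibrium condition for $\tilde p$ directly by passing to the limit from the finite-type games $\mathcal G_{\mu_k}$. Since the limit profile $(\tilde p,\dots,\tilde p)$ is tautologically symmetric, only the best-response inequality is in question. I fix an arbitrary distributional strategy $\tilde q \in \mathcal{\tilde S}_\mu$ and aim to show $\tilde U(\tilde q,\tilde p,\dots,\tilde p) \leq \tilde U(\tilde p,\tilde p,\dots,\tilde p)$ (with an extra additive $\varepsilon$ in the approximate case), by constructing a sequence of ``matched'' deviations $\tilde q_k \in \mathcal{\tilde S}_{\mu_k}$ with $\tilde q_k \to \tilde q$ weakly, invoking the equilibrium property of $\tilde p_k$, and letting $k \to \infty$.

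To build $\tilde q_k$, I would disintegrate $\tilde q$ via Radon--Nikodym (exactly the behavioral-to-distributional equivalence set up at the start of Section~\ref{sec:constrained:existence}) into its marginal $\mu$ on $T$ and a measurable family $q(t,\cdot) \in \Delta(A)$, and set $\tilde q_k(B \times \{a\}) := \int_B q(t,a)\,\mathrm d\mu_k(t)$. This measure manifestly has marginal $\mu_k$, hence lies in $\mathcal{\tilde S}_{\mu_k}$. For the weak convergence $\tilde q_k \to \tilde q$, testing against an arbitrary bounded continuous $\phi : T \times A \to \mathbb R$ reduces the claim to $\sum_a \int \phi(t,a)\,q(t,a)\,\mathrm d\mu_k(t) \to \sum_a \int \phi(t,a)\,q(t,a)\,\mathrm d\mu(t)$; since $q$ is only measurable, I would apply Lusin's theorem on the compact Polish space $T=[0,1]^d$ (viewing the behavioral strategy as a measurable map $T \to \Delta(A)$ into the simplex) to approximate $q$ by a continuous map off a set of arbitrarily small $\mu$-measure, pass to the limit $\mu_k \to \mu$ against the now bounded continuous integrand, and finally let the Lusin exceptional set shrink to zero.

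With $\tilde q_k \to \tilde q$ secured, the equilibrium property of $\tilde p_k$ in $\mathcal G_{\mu_k}$ gives $\tilde U(\tilde q_k,\tilde p_k,\dots,\tilde p_k) \leq \tilde U(\tilde p_k,\dots,\tilde p_k)$ (respectively, $+\,\varepsilon$). The profiles $(\tilde q_k,\tilde p_k,\dots,\tilde p_k)$ and $(\tilde p_k,\dots,\tilde p_k)$ converge coordinate-wise in the weak topology to $(\tilde q,\tilde p,\dots,\tilde p)$ and $(\tilde p,\dots,\tilde p)$, both of which live in $(\mathcal{\tilde S}_\mu)^n$; the assumed weak continuity of $\tilde U$ at those limit profiles then yields $\tilde U(\tilde q,\tilde p,\dots,\tilde p) \leq \tilde U(\tilde p,\dots,\tilde p)$ (respectively, $+\,\varepsilon$). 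Since $\tilde q$ was arbitrary, $\tilde p$ is a (symmetric) Nash equilibrium---or $\varepsilon$-equilibrium---of $\mathcal G_\mu$.

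The main obstacle is the Lusin step: $q$ is only measurable and must be approximated by a \emph{simplex-valued} continuous function (nonnegative, summing to one over $A$) while controlling the $\mu$-measure of the exceptional set. Once this approximation is executed cleanly, one must also commute the two nested limits ($k\to\infty$ first, Lusin error to zero second) and argue that finitely many actions $a\in A$ cause no additional issue. Everything else---coordinate-wise weak convergence of product profiles, equilibrium inequalities, passing to the limit under continuity---is then a routine exercise.
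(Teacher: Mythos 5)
Your overall skeleton is the same as the paper's (the paper argues by contradiction, you argue directly, which is equivalent): approximate an arbitrary deviation $\tilde q\in\mathcal{\tilde S}_\mu$ by deviations $\tilde q_k\in\mathcal{\tilde S}_{\mu_k}$, invoke the equilibrium property of $\tilde p_k$, and pass to the limit using the assumed weak continuity of $\tilde U$. However, the one step the paper leaves implicit --- the existence of $\tilde q_k\in\mathcal{\tilde S}_{\mu_k}$ with $\tilde q_k\to\tilde q$ weakly --- is exactly where your explicit construction breaks. Setting $\tilde q_k(B\times\{a\})=\int_B q(t,a)\,\mathrm d\mu_k(t)$ is problematic because the behavioral kernel $q(\cdot,a)$ is only a $\mu$-equivalence class: in the intended application $\mu$ is atomless while the $\mu_k$ are finitely supported, hence $\mu_k\perp\mu$, so $\tilde q_k$ depends entirely on the choice of version of $q$ on a $\mu$-null set, and for natural choices the convergence fails. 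Concretely, take $T=[0,1]$, $\mu$ Lebesgue, $\mu_k$ uniform on $\{i/k\}_{i\in[k]}$, $A=\{a,b\}$, and $q(t,a)=\mathbb 1[t\in\mathbb Q]$: then $\tilde q=\mu\otimes\delta_b$ but $\tilde q_k=\mu_k\otimes\delta_a\to\mu\otimes\delta_a\neq\tilde q$. This is precisely the phenomenon the paper warns about in the paragraph ``Weak convergence of distributional strategies'': composing a kernel defined $\mu$-a.e.\ with a mutually singular measure is meaningless. The Lusin step cannot repair this: Lusin gives a compact set $K$ with $\mu(T\setminus K)<\delta$ on which $q$ is continuous, but the error term in $\int\phi\,q\,\mathrm d\mu_k$ is controlled by $\mu_k(T\setminus K)$, and since $T\setminus K$ is open, weak convergence gives no upper bound on $\mu_k(T\setminus K)$ --- in the counterexample it equals $1$ for every $k$. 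So the two nested limits you flag as the ``main obstacle'' genuinely do not commute along this route.

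The gap is fixable, but by a different construction: transport the actions rather than re-integrate the kernel. For instance, pick couplings $\gamma_k$ of $\mu_k$ and $\mu$ that concentrate near the diagonal as $k\to\infty$ (possible since $\mu_k\to\mu$ weakly on the compact metric space $T$, e.g.\ optimal-transport plans for a metric metrizing weak convergence), and define $\tilde q_k$ by drawing $t\sim\mu_k$, drawing $t'$ from the conditional of $\gamma_k$ given $t$, and playing $a\sim q(t',\cdot)$. Then $\tilde q_k$ has marginal $\mu_k$, the kernel $q$ is only ever evaluated against $\mu$ (so measurability suffices), and weak convergence $\tilde q_k\to\tilde q$ follows from the uniform continuity of test functions on $T\times A$. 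With that substitute for your second paragraph, the remainder of your argument (equilibrium inequalities at stage $k$, coordinate-wise weak convergence of the profiles, continuity of $\tilde U$ at profiles in $(\mathcal{\tilde S}_\mu)^n$, and the $\varepsilon$-variant) goes through and coincides with the paper's proof.
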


\begin{proof}
For the sake of contradiction, assume that $\tilde p$ is not a symmetric Nash equilibrium of $\mathcal G_\mu$. Then there exists a best response $\tilde p^* \in \mathcal{\tilde S}_\mu$ such that playing $\tilde p^*$ raises the payoff of the player by a positive constant $\varepsilon > 0$, that is $\tilde U(\tilde p^*, (\tilde p)_{i\in[n-1]}) - \tilde U(\tilde p, (\tilde p)_{i\in[n-1]}) = \varepsilon > 0$. The sequence of measures $(\mu_k)_{k\geq 0}$ weakly converges towards $\mu$, hence there exists a sequence of distributional strategies $\tilde p_k^* \in \mathcal{\tilde S}_{\mu_k}$ weakly converging towards $\tilde p^*$. Using the continuity hypothesis on $\tilde U$, we show that $\tilde U(\tilde p_k^*, (\tilde p_k)_{i\in [n-1]}) - \tilde U(\tilde p_k, (\tilde p_k)_{i\in [n-1]})$ converges towards $\varepsilon$, and thus is positive for some $k \geq n$.
Therefore, it contradicts the fact that each $\tilde p_k$ is an equilibrium of $\mathcal G_{\mu_k}$. The proof with approximate equilibria is identical, if we set $\varepsilon$ in the proof to be equal to $\varepsilon$ from the statement of the theorem.
\end{proof}


\begin{myfigure}[p!]
    \centering
    \begin{tabular}{ccccc}
        & $r=0$ & $r = 1$ & $r = 2$ & $r = 5$ \\
        \rotatebox{90}{\qquad$k=100$} &
        \includegraphics[width=.2\linewidth]{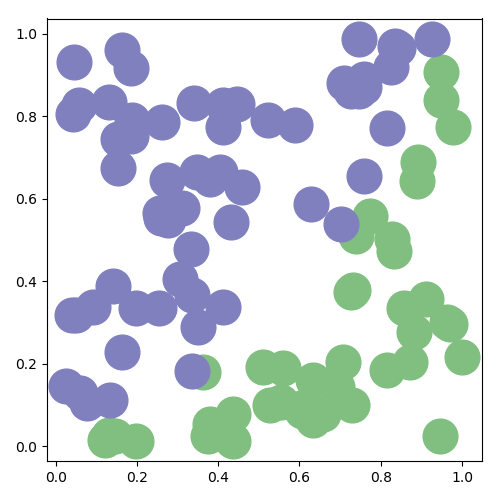} &
        \includegraphics[width=.2\linewidth]{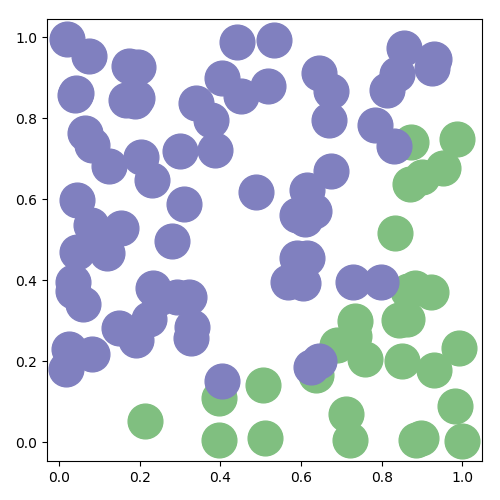} &
        \includegraphics[width=.2\linewidth]{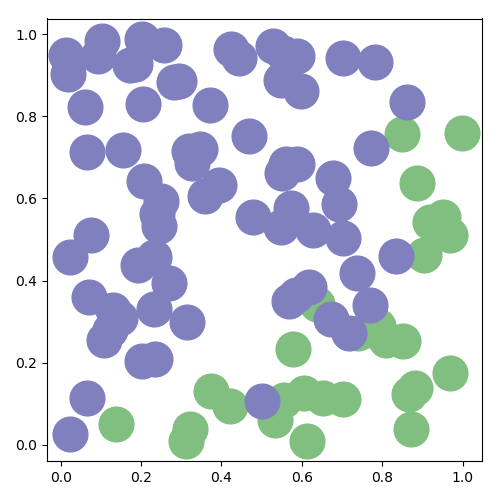} &
        \includegraphics[width=.2\linewidth]{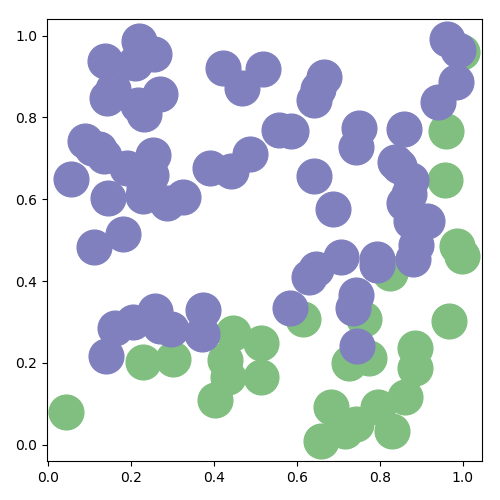} \\
        \rotatebox{90}{\qquad$k=1000$} &
        \includegraphics[width=.2\linewidth]{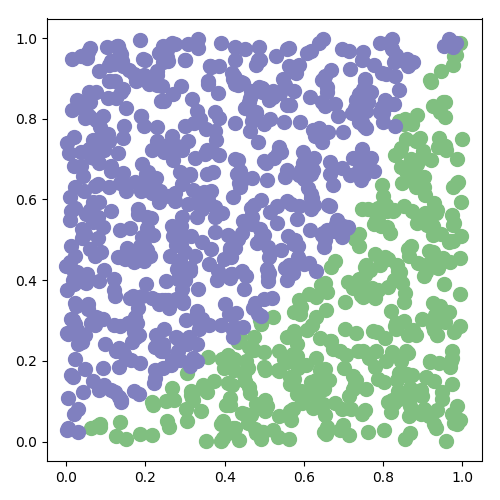} &
        \includegraphics[width=.2\linewidth]{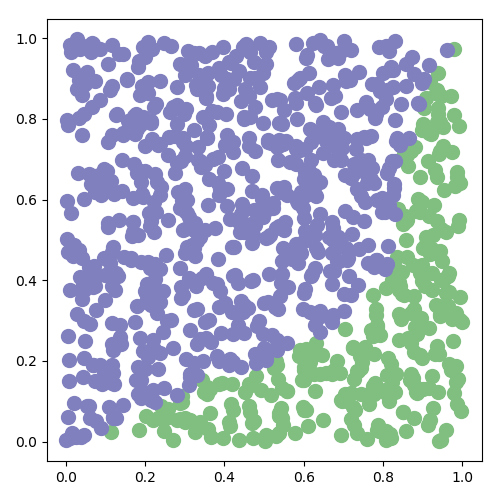} &
        \includegraphics[width=.2\linewidth]{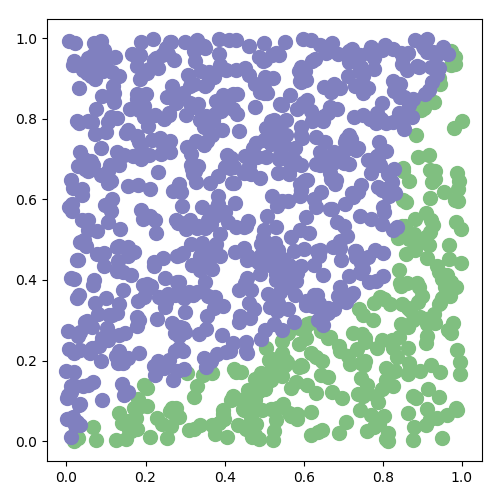} &
        \includegraphics[width=.2\linewidth]{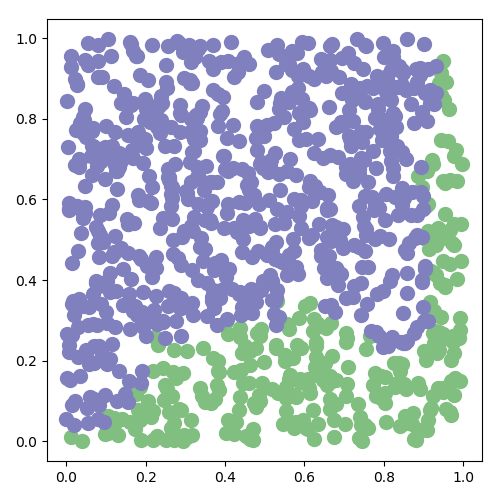} \\
        \rotatebox{90}{\qquad$k=10000$} &
        \includegraphics[width=.2\linewidth]{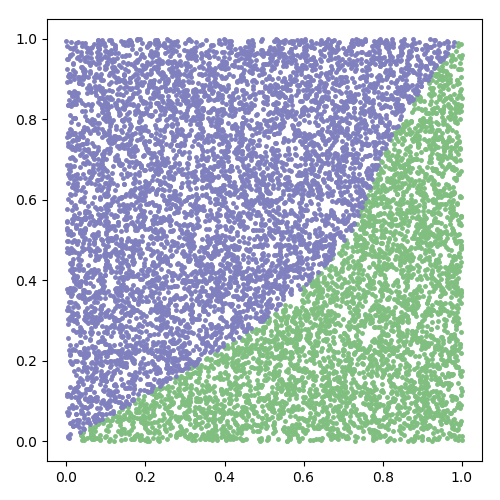} &
        \includegraphics[width=.2\linewidth]{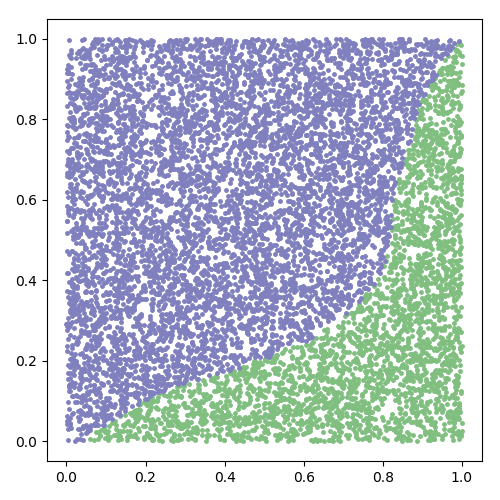} &
        \includegraphics[width=.2\linewidth]{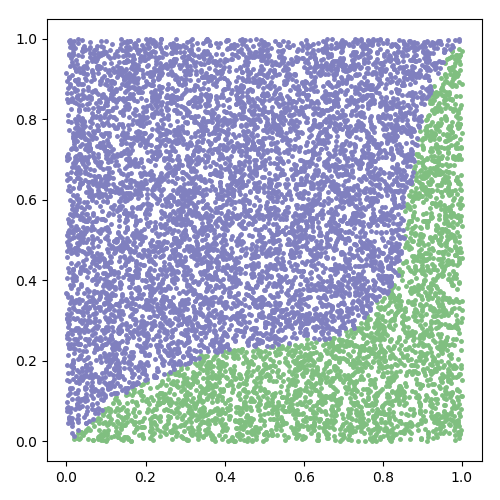} &
        \includegraphics[width=.2\linewidth]{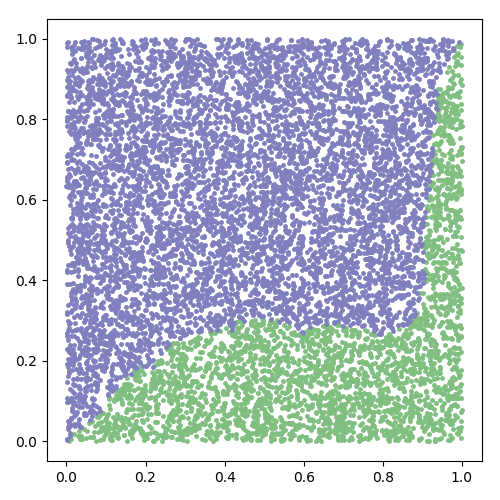} \\
    \end{tabular}
    \par\textbf{(a)} Pure equilibria for random discretizations of the game from \cref{fig:example-2}.
    \bigbreak
    \begin{tabular}{ccccc}
        & $r=0.1$ & $r = 1$ & $r = 10$ & $r = 100$ \\
        \rotatebox{90}{\qquad$k=100$} &
        \includegraphics[width=.2\linewidth]{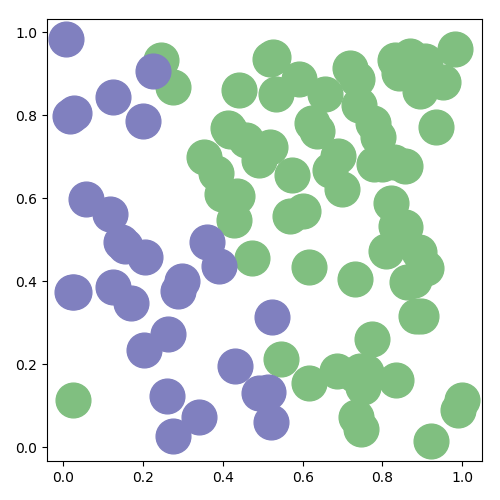} &
        \includegraphics[width=.2\linewidth]{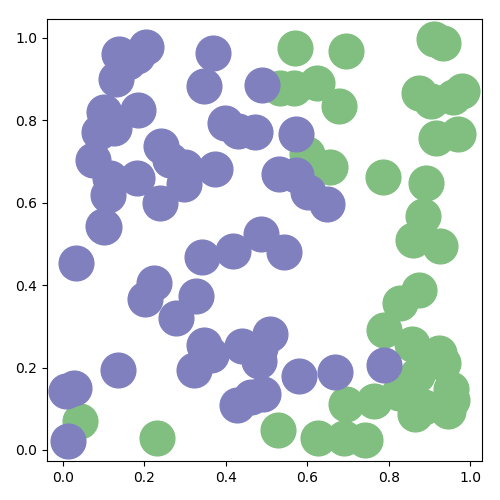} &
        \includegraphics[width=.2\linewidth]{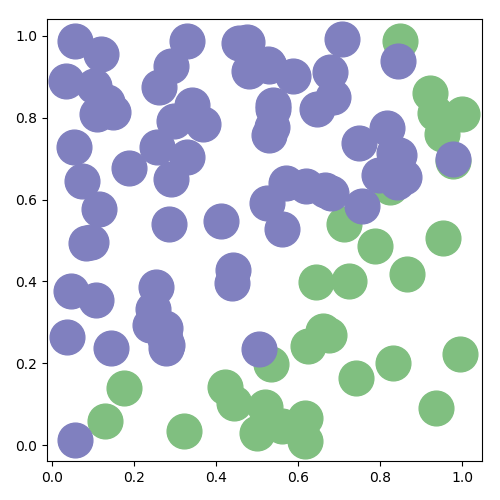} &
        \includegraphics[width=.2\linewidth]{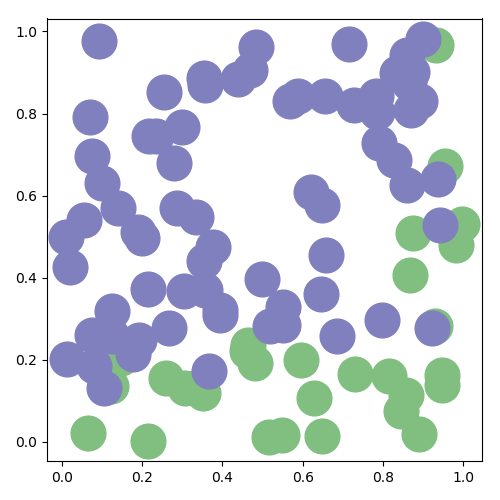} \\
        \rotatebox{90}{\qquad$k=1000$} &
        \includegraphics[width=.2\linewidth]{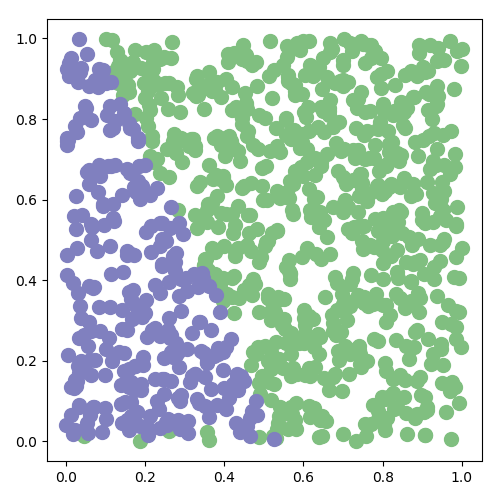} &
        \includegraphics[width=.2\linewidth]{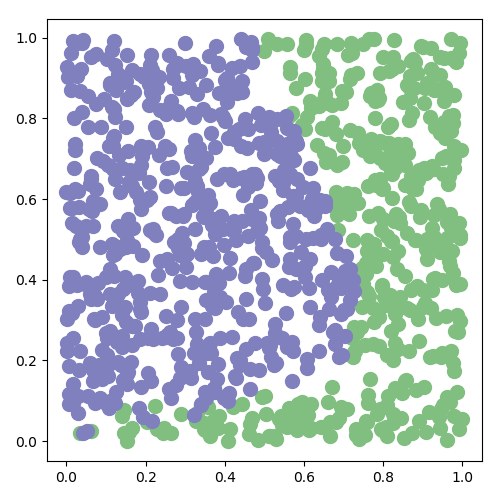} &
        \includegraphics[width=.2\linewidth]{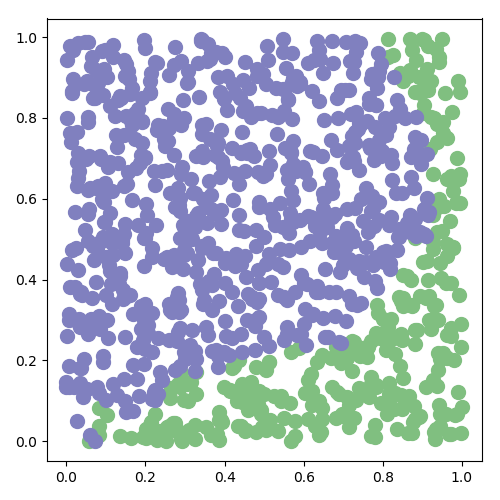} &
        \includegraphics[width=.2\linewidth]{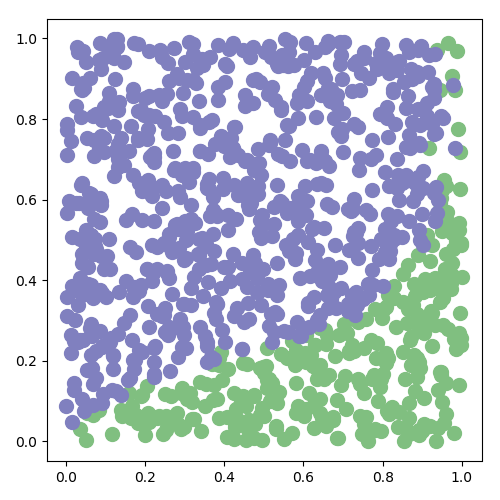} \\
        \rotatebox{90}{\qquad$k=10000$} &
        \includegraphics[width=.2\linewidth]{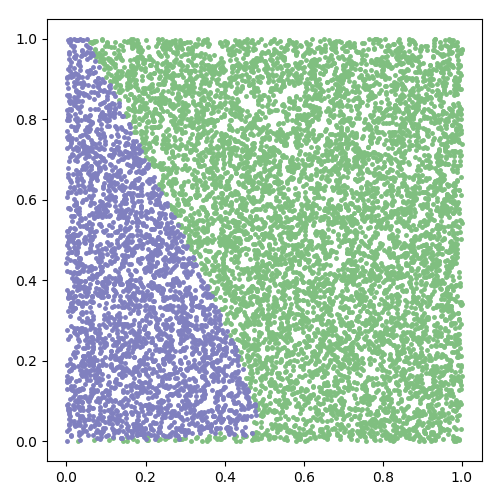} &
        \includegraphics[width=.2\linewidth]{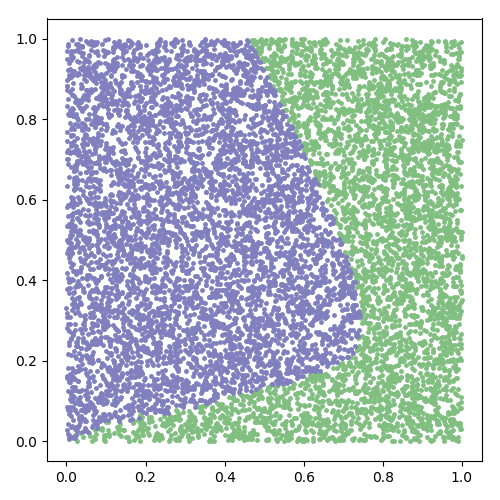} &
        \includegraphics[width=.2\linewidth]{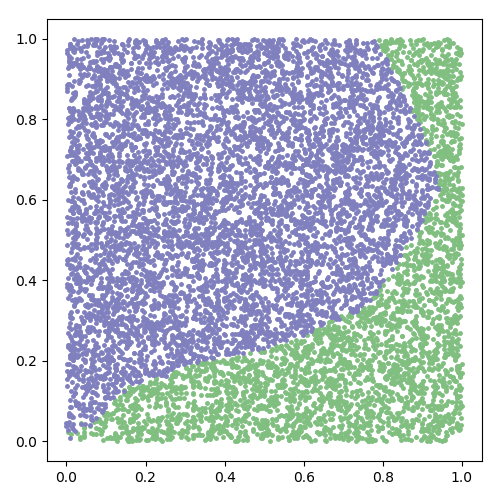} &
        \includegraphics[width=.2
        \linewidth]{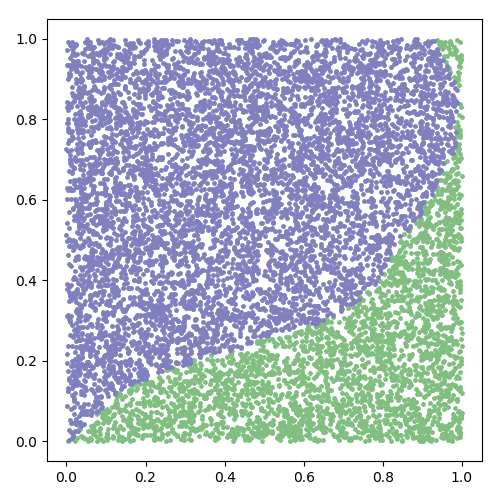} \\
    \end{tabular}
    \par\textbf{(b)} Pure equilibria for random discretizations of the game from \cref{fig:example-3}.
    \caption{Equilibria with $\ell=1$ for $\mathcal G_{\mu_k} = \langle n, \mathcal{\tilde S}_{\mu_k}, \tilde U\rangle$ computed using \cref{algo:strongalpha}, where $\mu_k$ is a distribution with a finite support of size $k$ approximating $\mu$. When $k\rightarrow+\infty$, the distributional strategies weakly converge towards the equilibria of $\mathcal G_\mu$ given in \cref{fig:example-2,fig:example-3}.}
    \label{fig:simulations-1}
\end{myfigure}

\begin{myfigure}[p!]
    \begin{tabular}{cc}
    \includegraphics[width=.45\linewidth]{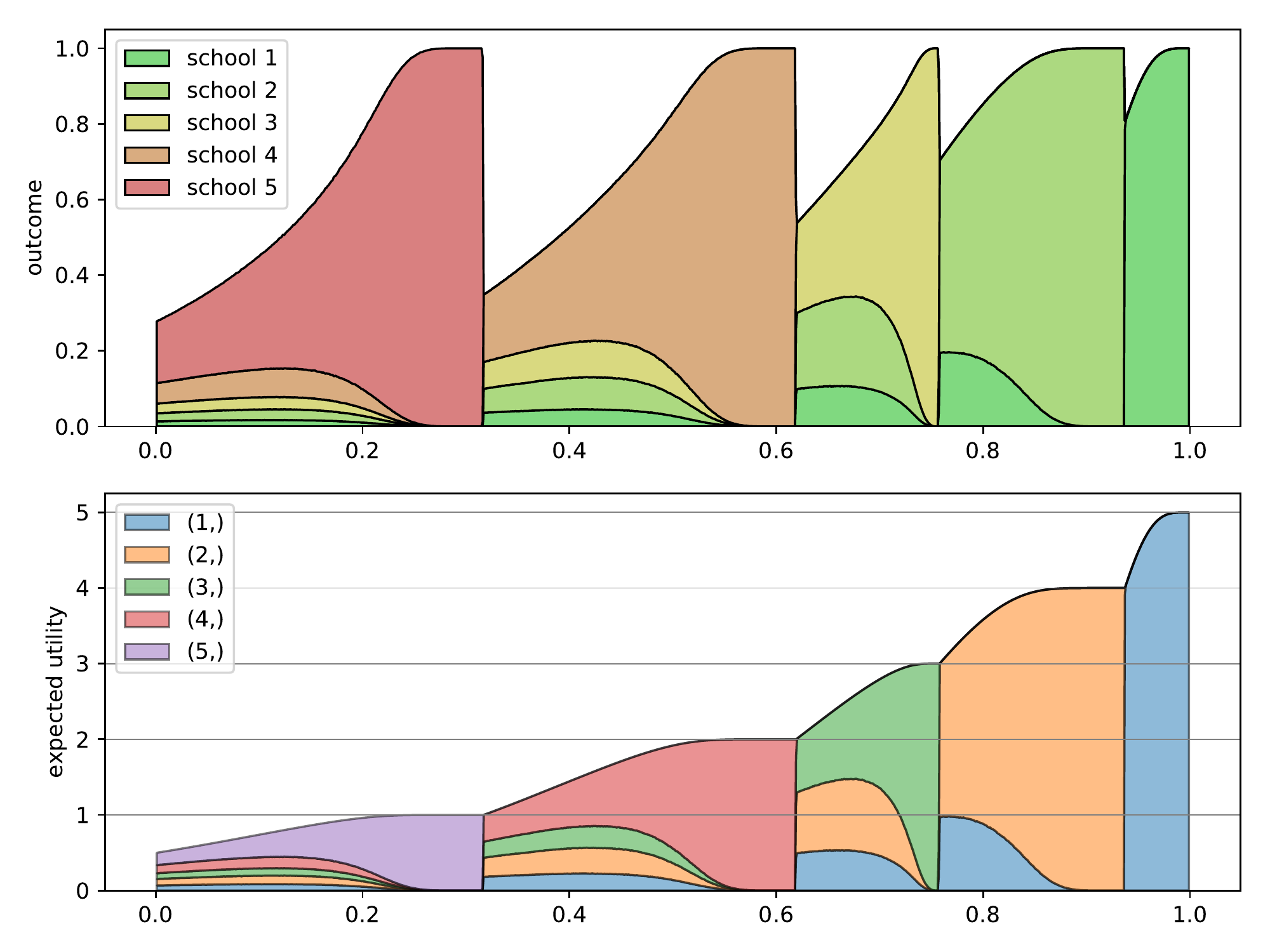} &
    \includegraphics[width=.45\linewidth]{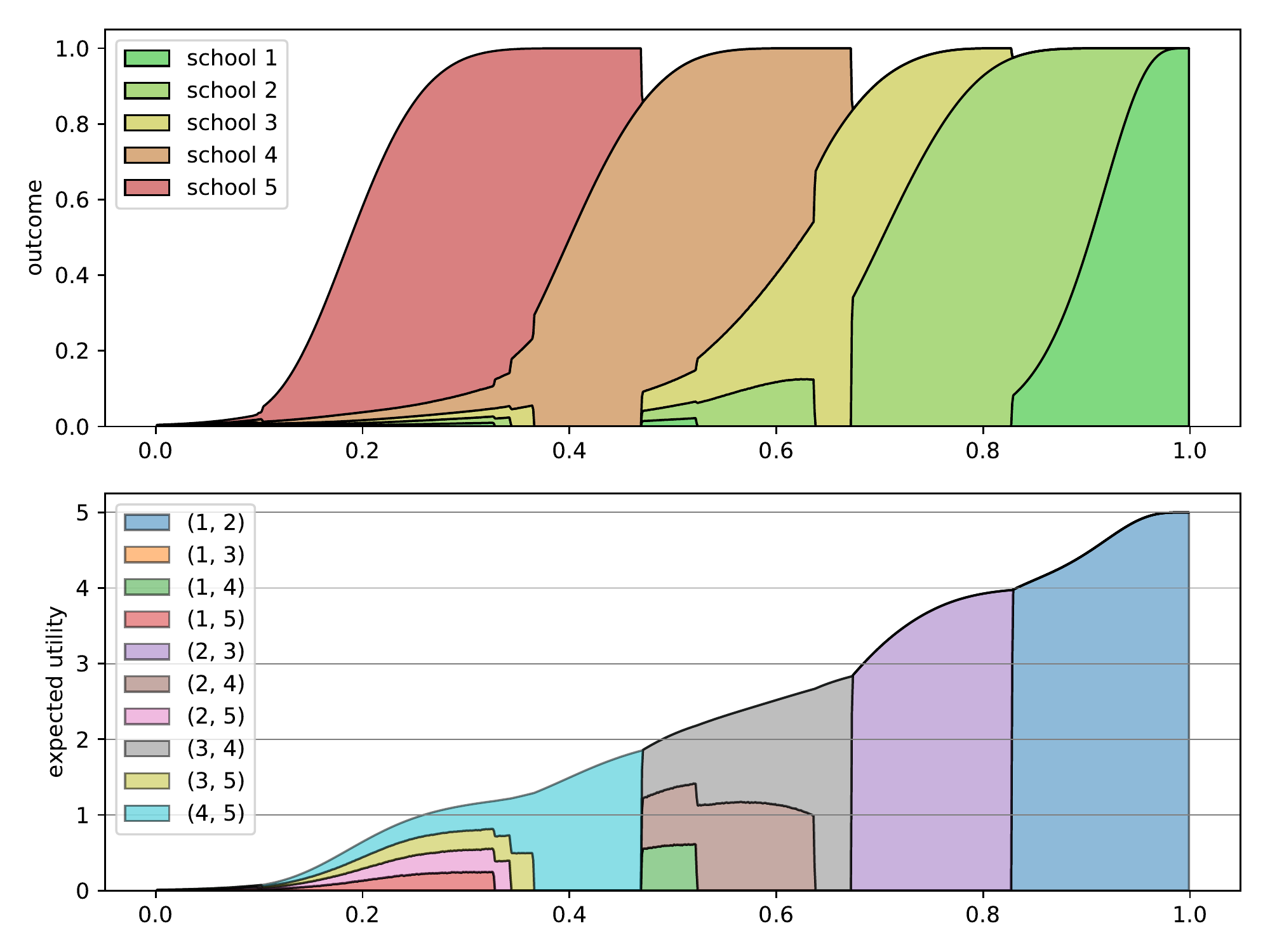} \\
    \textbf{(a)} Equilibrium with $\ell=1$. &
    \textbf{(b)} Equilibrium with $\ell=2$. \\\\
    \includegraphics[width=.45\linewidth]{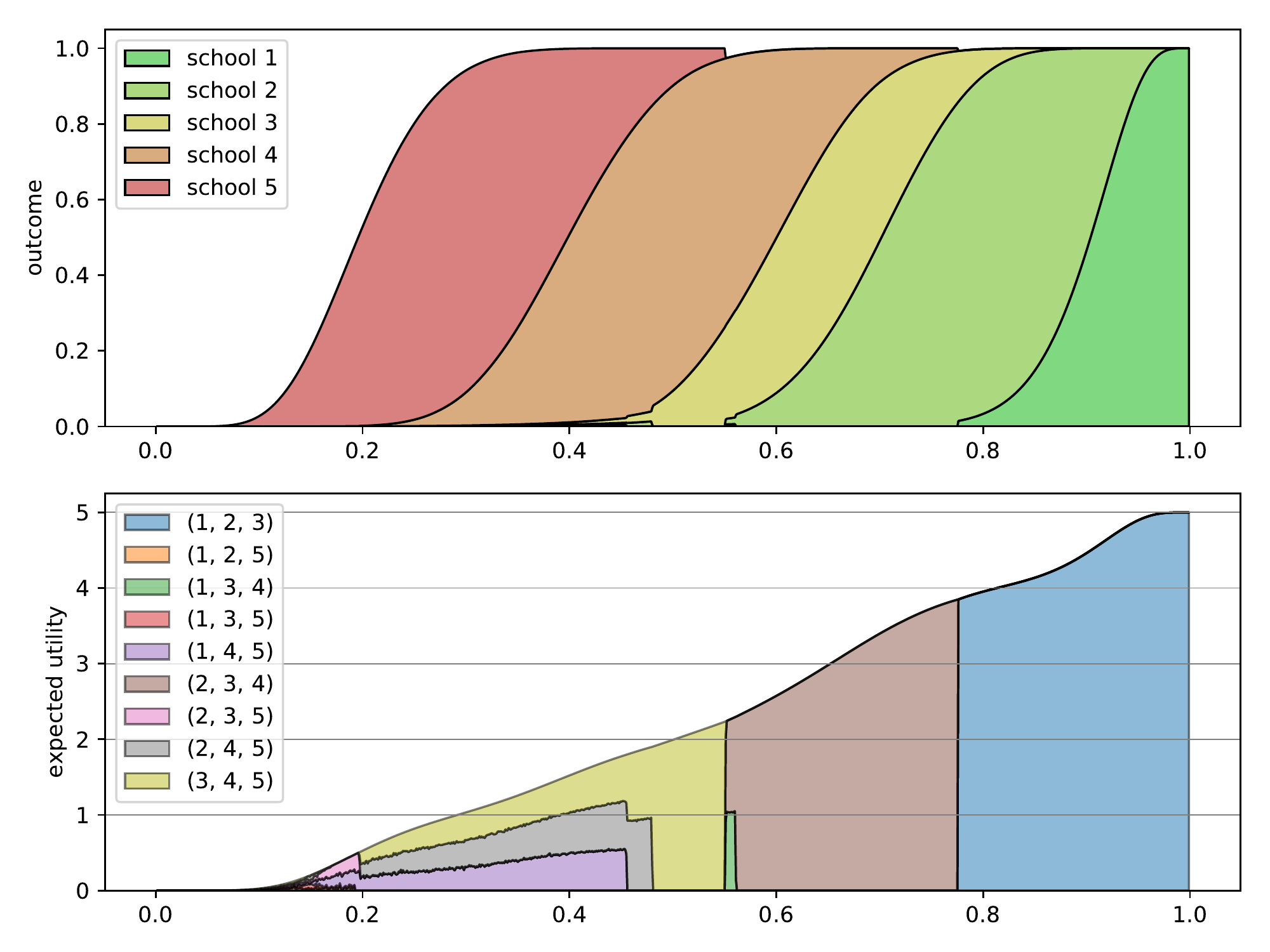} &
    \includegraphics[width=.45\linewidth]{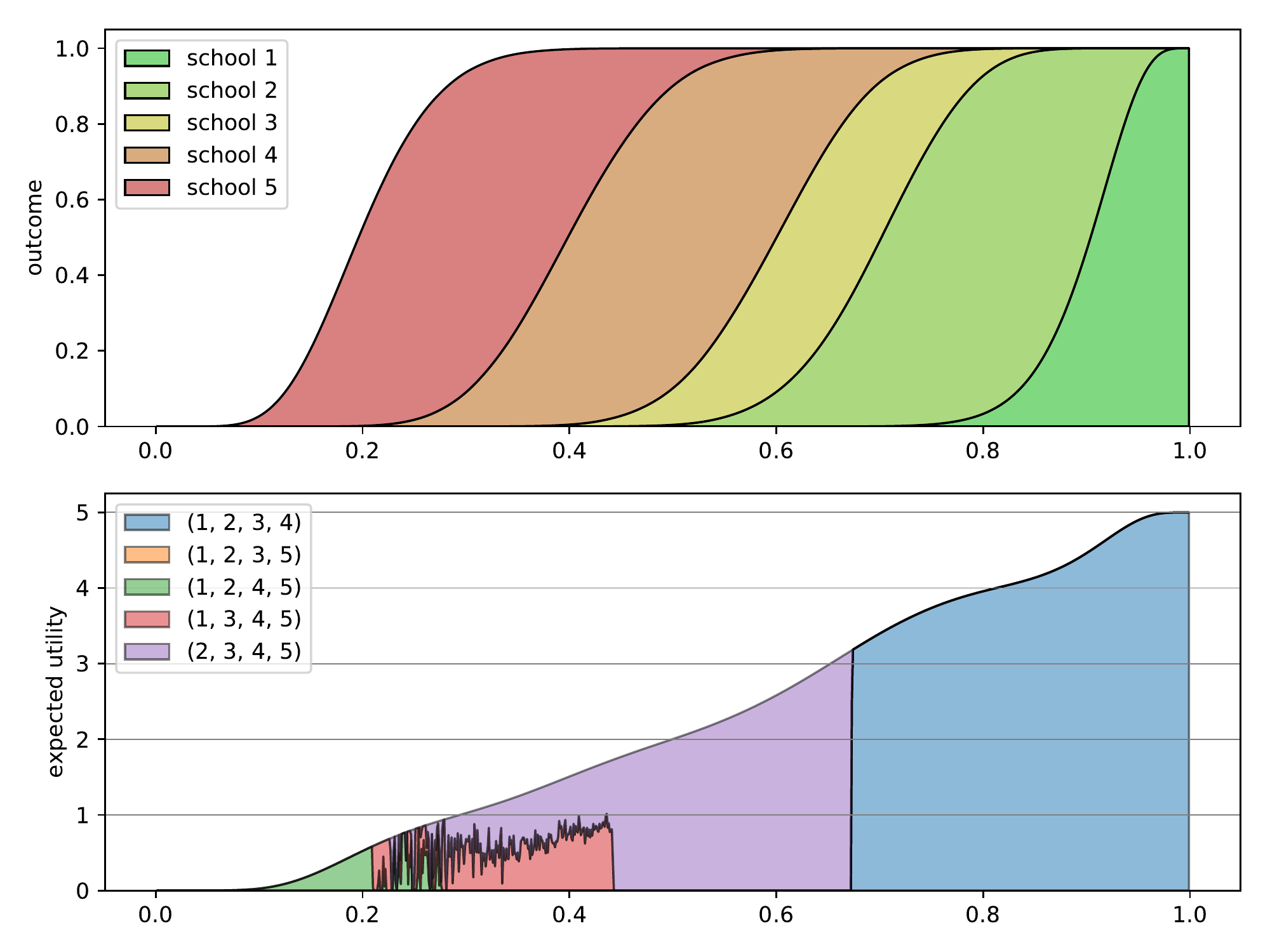} \\
    \textbf{(c)} Equilibrium with $\ell=3$. &
    \textbf{(d)} Equilibrium with $\ell=4$. \\\\
    \includegraphics[width=.45\linewidth]{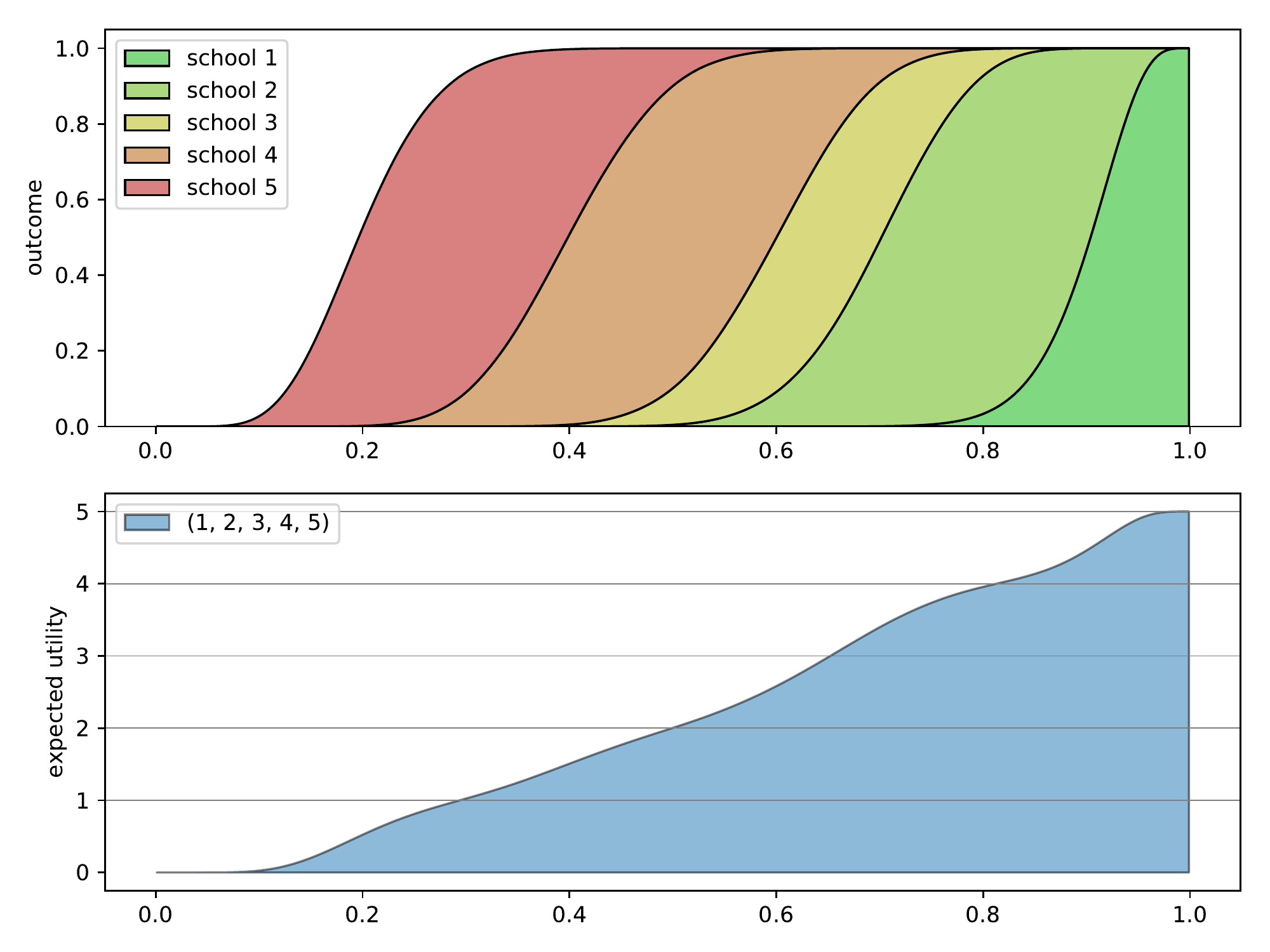} &
    \\
    \textbf{(e)} Equilibrium with $\ell=5$. &\\
    \end{tabular}
    \caption{Equilibria of the game defined in \cref{fig:example-5}, computed using \cref{algo:identical}. The top panel of each sub-figure plots the probability of each outcome, conditioned on the type of the student. In a mixed strategy, every action gives the same payoff. The bottom panel of each sub-figure decompose this payoff into the different strategy from the mixed equilibrium, conditioned on the type of the student. Exactly as we argued in \cref{fig:example-5}, the equilibrium with 3 applications per student yields almost the same outcome (in terms of expected payoff) as the one with 5 applications per student.}
    \label{fig:simulations-2}
\end{myfigure}


\section{Simulations}
\label{sec:constrained:simul}

Implementations are available at the following address:
\begin{center}
\url{https://github.com/simon-mauras/stable-matchings/tree/master/Equilibrium}
\end{center}

\paragraph{Convergence theorem with mixed equilibrium.} Both \cref{algo:strongalpha,algo:identical} compute a pure equilibrium $\tilde p_k$ of the game $\mathcal G_{\mu_k}$, in the sense that the behavioral strategy $p_k$ is pure. Using \cref{thm:convergence} we show that $\tilde p_k$ weakly converge towards an equilibrium $\tilde p$ of $\mathcal G_\mu$. When $p$ is a pure strategy, we can easily approximate $p$ by the strategy $p_k$ with a large $k$ (see \cref{fig:simulations-1}). However, if $p$ is a mixed strategy, we need an extra step to compute the limit: for every $t\in T$ we consider the $\sqrt{k}$ types from the support of $\mu_k$ that are closest to $t$, and let $p_k(t)$ be the average strategy over those points (see \cref{fig:simulations-2}).

\paragraph{Stronly $\alpha$-reducible preferences.} When the game is strongly $\alpha$-reducible, we can compute equilibrium with $\ell=1$ application per student. Special cases include when students have identical preferences (\cref{fig:example-3}) and when schools have identical preferences (\cref{fig:example-4}). In each case we implement \cref{algo:strongalpha} in Python (see \verb|identical-students.py| and \verb|identical-schools.py| respectively), to generate \cref{fig:simulations-1}.

\paragraph{Schools have identical preferences.} When schools have identical preferences, we can compute equilibrium for any $\ell \geq 1$. For simplicity our implementation also assumes that students have identical preferences. Because the complexity of \cref{algo:identical} is exponential in the number of students and the number of schools, an efficient implementation is preferable, which is the reason why we chose to have a Python script (\verb|identical-all.py|) interacting with a C++ solver (\verb|exact.cpp|).

The expensive part of \cref{algo:identical} is to evaluate the payoff of each action. To speed-up the computation, an improved solver (\verb|approximate.cpp|) outputs an approximate equilibrium of $\mathcal G_{\mu_k}$, which will converge towards an approximate equilibrium of $\mathcal G_\mu$.
The main idea is to replace the dynamic programming approach (where we compute $q$) by Monte Carlo simulations.
We randomly partition $\{t_1, \dots, t_k\}$ into $r = k/(n-1)$ sets of $n-1$ types, each corresponding to a ``run''. When considering the type $t_i$ with $1 \leq i \leq k$, we approximate $q$ by the empirical distribution of remaining capacities over the $r$ runs. \cref{fig:simulations-2} was obtained by setting $r=2\,000\,000$ and $n=50$, which runs in roughly 1 minute. 

\section{Conclusion and open questions}

In this paper, we generalized the game defined by Haeringer and Klijn \cite{haeringer2009constrained}, in a setting where students have incomplete information. We discussed the existence and the computability of equilibria in several setting. The following questions are left open for future work:
\begin{itemize}
    \item\textbf{Equilibria with 1 application per student.} In the complete information case, Haeringer and Klijn show that equilibria with 1 application per student correspond to stable matchings. As illustrated in Figure~\ref{fig:example-1.3}, the incomplete information game can have an infinite number of equilibrium. But does the set of equilibria has a lattice structure?
    \item\textbf{Unique equilibrium.} In Section~\ref{sec:constrained:efficient}, we compute equilibria with a finite number of types by eliminating dominant strategies. If each eliminated strategy strictly dominates other strategies, the equilibrium is unique. Using a convergence theorem, does unicity extends to the case where types are continuous?
    \item\textbf{Differential equations.} When combined with the convergence theorem, algorithms from Section~\ref{sec:constrained:efficient} can be seen as first order Euler methods, which eventually solve differential equations. Such equations might lead to more efficient algorithms, and a to a proof that the equilibrium is unique.
\end{itemize}

\bibliographystyle{alpha}
\bibliography{biblio.bib}
\end{document}